\newtheorem{proposition}{Proposition}
\newtheorem{proposition?}{Proposition?}
\theoremstyle{definition}
\newcommand{\real}{\mathbb R} 
\newcommand{\half}{\frac{1}{2}} 
\newcommand{\mo}[1]{\left| #1 \right|} 
\newcommand{\hi}{\mathcal{H}} 
\newcommand{\ip}[2]{\left\langle\,#1\,|\,#2\,\right\rangle} 
\newcommand{\no}[1]{\left\|#1\right\|} 
\newcommand{\tr}[1]{\textrm{tr}\left[#1\right]} 
\newcommand{\rank}{\mathrm{rank}\,} 
\newcommand{\id}{\mathbbm{1}} 
\newcommand{\ltwo}[1]{L^2(#1)} 
\newcommand{\G}{\mathsf{G}}
\renewcommand{\P}{\mathsf{P}}
\newcommand{\M}{\mathsf{M}}
\newcommand{\N}{\mathsf{N}}
\begin{document}
\title[]{Noise Robustness of the Incompatibility of Quantum Measurements}

\begin{abstract}
The existence of \emph{incompatible measurements} is a fundamental phenomenon having no explanation in classical physics. Intuitively, one considers given measurements to be incompatible within a framework of a physical theory, if their simultaneous implementation on a single physical device is prohibited by the theory itself. In the mathematical language of quantum theory, measurements are described by POVMs (positive operator valued measures), and given POVMs are by definition incompatible if they cannot be obtained via coarse-graining from a single common POVM;
this notion generalizes noncommutativity of projective measurements. In quantum theory, incompatibility can be regarded as a resource necessary for manifesting phenomena such as Clauser-Horne-Shimony-Holt (CHSH) Bell inequality violations or Einstein-Podolsky-Rosen (EPR) steering which do not have classical explanation. We define operational ways of \emph{quantifying} this resource via the amount of added classical noise needed to render the measurements compatible, i.e., useless as a resource. In analogy to entanglement measures, we generalize this idea by introducing the concept of incompatibility measure, which is monotone in local operations. In this paper, we restrict our consideration to binary measurements, which are already sufficient to explicitly demonstrate nontrivial features of the theory. In particular, we construct a family of incompatibility monotones operationally quantifying violations of certain scaled versions of the CHSH Bell inequality, prove that they can be computed via a semidefinite program, and show how the noise-based quantities arise as special cases. We also determine maximal violations of the new inequalities, demonstrating how Tsirelson's bound appears as a special case. The resource aspect is further motivated by simple quantum protocols where our incompatibility monotones appear as relevant figures of merit.
\end{abstract}

\author{Teiko Heinosaari}
\affiliation{Turku Centre for Quantum Physics, Department of Physics and Astronomy, University of Turku, Finland}
\author{Jukka Kiukas}
\affiliation{Department of Mathematics, Aberystwyth University, Aberystwyth, SY23 3BZ, United Kingdom}
\author{Daniel Reitzner}
\affiliation{Institute of Physics, Slovak Academy of Sciences, D\'ubravsk\'a cesta 9, 845 11 Bratislava, Slovakia}

\pacs{03.67.-a, 03.65.Ud, 03.65.Ta, 03.67.Ac, 03.65.Aa}

\maketitle

\section{Introduction}\label{sec:intro}

Small-scale physical systems exhibit features that cannot be explained by classical physics. This is often considered as a resource in the context of quantum information theory: it allows one to perform useful tasks (e.g., computational ones) not implementable via classical protocols, and it is costly to maintain in practice. The precise content of this idea has been mathematically formulated for \emph{quantum states}, in terms of entanglement \cite{Horo09} and also more generally in \cite{CoFrSp14,BrGo15}. Without going into details of such \emph{resource theories}, we list their main ingredients motivating our study: (A) the definition of the ``void resource'', (B) specification of quantum operations $\Lambda$ \footnote{A quantum operation is a (completely) positive transformation on the set of quantum states.} that cannot transform a void resource into a useful one, and (C) quantification of the resource. For instance, in the case of entanglement, void resources are the separable states, the operations in (B) are local operations and classical communication (LOCC), and (C) refers to various entanglement measures.

As demonstrated by the existence of a multitude of different entanglement measures, there is in general no unique way of quantifying quantum resources. However, the idea that noise cannot create such a resource suggests an appealing and operational way of quantifying it via \emph{the least amount of classical noise needed to render the resource void} \cite{ViTa99,GrPoWi05}. Such a quantity is often referred to as \emph{robustness} \cite{BrGo15,ViTa99}. In this context, the mathematical description of noise is convex-geometric: as a simple scheme,
consider a preparation device which outputs a bipartite state $\rho$ (a density matrix) with probability $1-\lambda$ and a completely mixed state with probability $\lambda$; the resulting state is then the convex mixture 
$\rho_\lambda =(1-\lambda)\rho+\lambda \id/d^2$,
where $d$ is the dimension of each subsystem. This noise model has been used in investigating non-classicality of bipartite correlations. In fact, for the noise parameter $\lambda$ large enough, the state $\rho_\lambda$ has a local classical model so that all Bell inequalities hold; this goes back to the construction of the Werner states \cite{We89}, and the same construction works also for the arbitrary pure entangled state $\rho$ \cite{AlPiBaToAc07}.

Having briefly reviewed relevant existing ideas concerning quantum states as resources, we now change the point of view, considering \emph{quantum measurements as resources} instead. In order to motivate this, we note that characterizing the nonclassicality of a quantum system as a property of \emph{state alone} has a limited practical significance as the set of available measurements is almost always restricted in real experiments.

In particular, obtaining violations of a Bell inequality fails if measurements are not appropriately chosen, even if the quantum state is maximally entangled. The basic scenario we have in mind here is typical in quantum information theory: two local parties, Alice and Bob, share a bipartite state, and are capable of performing some restricted set of \emph{local} measurements in their respective laboratories. Thus, in this context the state is a \emph{nonlocal resource}, while Alice's and Bob's measurements represent \emph{local resources}. In addition to Bell inequality violations, the chosen setting is relevant for, e.g., Einstein-Podolsky-Rosen (EPR) steering \cite{UoMoGu14}.

The task of this paper is to systematically study a specific measurement resource, \emph{incompatibility} of the measurements, by appropriately formulating the above points (A)--(C) for measurements in Sec.~\ref{sec:incompatibilitymeasurements}. Explicitly, this consists of (A) stating the mathematical definition of compatibility (defining the void resource), (B) observing that incompatibility cannot be created by any quantum operation, and (C) introducing the concept of an \emph{incompatibility monotone} for the quantification of the resource.

The main part of the paper is devoted to constructing and studying concrete incompatibility monotones for binary measurements. We begin by constructing a family of incompatibility monotones via a semidefinite program in Sec.~\ref{sec:Ia}, also connecting them explicitly to the Clauser-Horne-Shimony-Holt (CHSH) Bell inequality setting, where incompatibility appears manifestly as a local resource. In Sec.~\ref{sec:noiserob} we then show that these incompatibility monotones can be operationally interpreted as measures of noise robustness with an adaptation of the noise model described above.
In Sec.~\ref{sec:maxinc} we use the developed formalism to define maximal incompatibility, determine which observables have this property, and give an example of how they could be constructed with restricted (concrete) experimental resources. Finally, in Sec.~\ref{sec:game}, we demonstrate the usefulness of the results in the form of a ``quantum game.''
\section{Incompatibility as a resource in quantum theory}
\label{sec:incompatibilitymeasurements}
In this section we consider incompatibility from the general resource theoretic point of view, as outlined in the introduction. In particular, we introduce the notion of ``incompatibility monotone'' for quantification of the resource. Here we wish to point out that after the appearance of the original eprint version of the present paper, similar general ideas appeared in \cite{pusey}. Here we only consider the general aspects to the extent they apply to binary measurements the detailed analysis of which is our main topic. Full resource theory of incompatibility is beyond the scope of this paper.

\subsection{Definition of incompatibility}

A measurement (or observable) in any probabilistic theory is represented by a map that associates a probability distribution (describing the measurement outcome statistics) to any state of the system described by the theory. \emph{Incompatibility of measurements} refers to a phenomenon occurring in quantum theory and general probabilistic theories: there are observables that do not have a common refinement from which they could be obtained by coarse-graining the associated probability distributions. This formal concept corresponds to the intuitive idea of measurements that cannot (even in principle) be performed simultaneously by one device. We now proceed to formulate it precisely in the quantum case.

A quantum measurement (with discrete outcomes $i=1,\ldots, m$) is described by a map $\rho\mapsto ({\rm tr}[\rho M_i])_{i=1}^m$, where $\rho$ is a density matrix, and the associated probability distribution $({\rm tr}[\rho M_i])_{i=1}^m$ is determined by a positive operator valued measure (POVM) $\M=(M_i)_{i=1}^m$, satisfying $0\leq M_i\leq \id$ and $\sum_i M_i=\id$. In general, operators $M$ satisfying $0\leq M\leq \id$ are called \emph{effects}.

We now come to an essential definition: two measurements $\M=(M_{i})$ and $\N=(N_{j})$ are said to be \emph{compatible} (or \emph{jointly measurable}) \cite{Lu85} if there exists a common refinement, i.e., a POVM $\G=(G_{ij})$ such that \begin{align*}
M_i &=\sum_j G_{ij}, & N_j &= \sum_i G_{ij}.
\end{align*}
If $\M$ and $\N$ are not compatible, they are said to be \emph{incompatible}. A measurement $\M=(M_{i})$ is projective (or a von Neumann measurement), if $M_i$ is an orthogonal projection for all $i$'s. It is well known that two projective measurements $\M$ and $\N$ are compatible exactly when they are commutative, i.e., $[M_i,N_j]=0$ for all $i,j$. In general, commutativity is not necessary for compatibility; see, e.g., \cite{LaPu97} for a discussion.

\subsection{Incompatibility under quantum operations}

Suppose we measure an observable $\M$ after first applying a quantum channel to an initial state $\rho$. Then the probability of getting outcome $i$ is ${\rm tr}[\Lambda(M_i)\rho]$, where $\Lambda$ is the channel in the Heisenberg picture. Hence, we can interpret this as a measurement of the transformed POVM $\Lambda(\M):=(\Lambda(M_i))$. Since $\Lambda$ is by assumption (completely) positive and unital, this indeed defines a valid POVM.

In order to consider incompatibility as a resource in the sense outlined in the introduction, we should specify quantum channels $\Lambda$ that cannot create incompatibility. For entanglement, which is a nonlocal resource, the relevant operations are LOCC (local operations and classical communication). In contrast, nonlocality does not play any role in the definition of incompatibility, and accordingly, there is no restriction on the set of operations; in fact, \emph{incompatibility cannot be created by any quantum channel}. In order to prove this, suppose that the measurements $\M$ and $\N$ are compatible with a joint POVM $\G$, and let $\Lambda$ be a quantum channel. Then by positivity, $\Lambda(\G)$ is a POVM, and by linearity, we have $\sum_j \Lambda(G_{ij})=\Lambda(M_i)$ and $\sum_i \Lambda(G_{ij})=\Lambda(N_j)$, i.e., $\Lambda(\G)$ is a joint POVM for the transformed POVMs $\Lambda(\M)=(\Lambda(M_i))$ and $\Lambda(\N)=(\Lambda(N_i))$. Hence, $\Lambda(\M)$ and $\Lambda(\N)$ are compatible.

\subsection{Incompatibility monotones}

The observation of the previous subsection now suggests the following general requirements that should be satisfied by any \emph{quantification} of incompatibility.

Let $\mathcal M$ be some collection of POVMs. A real valued function $I$ defined on pairs of POVMs from $\mathcal M$ is called an \emph{incompatibility monotone} on $\mathcal M$ if it fulfills the following conditions:
\begin{itemize}
\item[(i)] $I(\M,\N)=0$ if and only if $\M$ and $\N$ are compatible.
\item[(ii)] $I(\M,\N)=I(\N,\M)$.
\item[(iii)] $I(\Lambda(\M),\Lambda(\N))\leq I(\M,\N)$ for all POVMs $\M,\N$ and any quantum channel $\Lambda$.
\end{itemize}
Here property (i) is the basic requirement: the quantification must distinguish void resources. Property (ii) is natural because incompatibility is related to \emph{pairs} of measurements. Property (iii), \emph{monotonicity} in quantum operations, corresponds exactly to monotonicity of entanglement measures in LOCC operations \cite{Horo09}. As one consequence of the definition, any incompatibility monotone decays as expected under Markovian dynamical evolution. Note also that (iii) implies \emph{unitary invariance}, i.e.,
\begin{equation*}
I(U\M U^*,U\N U^*)=I(\M,\N)
\end{equation*}
for all unitaries $U$ on $\hi$.

\subsection{Application: CHSH violation in the binary case}

We now proceed to restrict the setting to binary measurements; these extract exactly one bit of classical information from a given quantum state. Since this restricted setting will be used for the rest of the paper, it deserves detailed discussion. Note that we do \emph{not} restrict the dimension of the Hilbert space.

Binary measurements are represented by two-element POVMs $\M=(M, \id-M)$, corresponding to the outcomes $1$ and $0$, respectively.
Two binary observables $\M$ and $\N$ are incompatible (in the sense of the definition above) exactly when it is \emph{not} possible to construct a joint measurement that outputs two bits of classical information in such a way that the first bit represents the measurement outcome of $\M$ and the second the measurement outcome of $\N$. If such a joint measurement $\G$ exists (i.e., $\M$ and $\N$ are compatible), then $\G$ is a four-element POVM $\G=(G_{11},G_{10},G_{01},G_{00})$ satisfying
\begin{align}
G_{11}+G_{10}&= M, & G_{11}+G_{01}&=N, & \sum_{ij} G_{ij}&=\id;\label{equalityconst}\\
G_{ij}&\geq 0, & i,j=0,1.\label{semidefconst}
\end{align}
As an illustration, one can think of $\G$ as a measurement device with four LEDs; two of the LEDs correspond to the measurement outcome $1$ for $\M$, and similarly for $\N$ (see Fig.~\ref{fig:partitions}).

It is important to emphasise that there is in general no explicit analytical expression of $\G$ in terms of $\N$ and $\M$; this makes it nontrivial to decide whether two given measurements are incompatible, even in a one-qubit system.

\begin{figure}
\begin{center}
\includegraphics[scale=0.6]{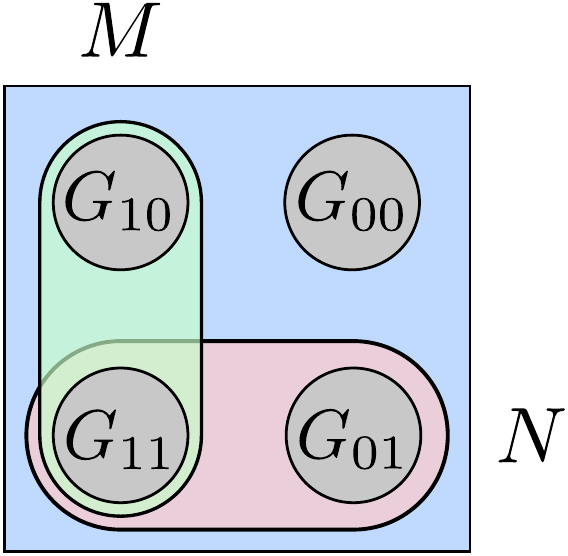}
\end{center}
\caption{\label{fig:partitions} Joint measurement $\G$ of two binary observables $\M=(M,\id-M)$ and $\N=(N,\id-N)$ is a four-element POVM such that (\ref{equalityconst}) and (\ref{semidefconst}) hold --- the outcome of the measurement $\M$ corresponds to the first outcome (index) of measurement $\G$ and the outcome of the measurement $\N$ corresponds to the second outcome (index).}
\end{figure}

Perhaps the most important task where incompatibility concretely appears as a resource, is obtaining CHSH Bell inequality violations. For the benefit of readers not familiar with CHSH, we briefly describe the setup with notations to be used below: two (spatially separated) parties, Alice and Bob, both have a pair ($\M^A,\N^A$), ($\M^B, \N^B$) of of \emph{local} binary measurements. For each run of the experiment, Alice and Bob perform \emph{one} measurement each, and record the results. After sufficiently many repetitions, they compare the results to construct a joint correlation table. The correlations are described by the expectation value operators $A_1=2M^A-\id$, $A_2=2N^A-\id$, and $B_1$, $B_2$ similarly for Bob. The \emph{CHSH Bell inequality} for given bipartite state $\rho$ is then given by ${\rm tr}[\rho \mathbb B]\leq 1$, where
\begin{equation}\label{bellop}
\mathbb B=\frac{1}{2}\left[A_1\otimes(B_1+B_2)+A_2\otimes(B_1-B_2)\right].
\end{equation}
Violation of the inequality shows that the correlations cannot be described by a local classical model; for more information we refer to \cite{WeWo01}.

The following result, proved recently in \cite{WoPeFe09}, shows that incompatibility is a resource in this context: a given pair of Alice's measurements is incompatible if and \emph{only if} there exists a bipartite state and a pair of measurements for Bob, such that the CHSH Bell inequality is violated. The amount of Bell inequality violation can be expressed as a semidefinite program, the dual of which quantifies deviation from compatibility in a certain sense; we generalize this idea below.

As an example of a more general scenario (where measurements can have multiple outcomes), incompatibility is crucial for EPR steering \cite{UoMoGu14,QuVeBr14}.

\section{Incompatibility monotones via semidefinite program}
\label{sec:Ia}

The rest of the paper is devoted to constructing instances of incompatibility monotones in the binary setting. In this section, we are motivated by the following fact: it is important to have \emph{efficiently computable} quantifications of quantum resources. A convex optimization problem constrained by semidefinite matrix inequalities is called a \emph{semidefinite program} (SDP) \cite{VaBo96}; they are efficiently computable, and appear frequently in quantum information theory \cite{DoPaSp03,JaMi05,ViWe02}. In fact, entanglement measures are often defined via suitable optimization \cite{Horo09}.

We now demonstrate that a large class of incompatibility monotones may be computed via SDP. The construction is based on the convex structure of the set of joint POVMs for a compatible pair of  POVMs. In fact, given two binary POVMs $\M$ and $\N$, the possible joint POVMs $\G=(G_{ij})_{i,j=0,1}$ are specified by the \emph{equality constraints} \eqref{equalityconst}, together with the \emph{semidefinite constraints} \eqref{semidefconst}. Hence, deciding whether two binary measurements are incompatible is manifestly an SDP; this fact was pointed out in \cite{WoPeFe09}.

Here we develop the program further by observing that it can be made \emph{feasible} \cite{VaBo96} by deforming the semidefinite constraints. 
The most general \emph{linear symmetric deformation by identity} is given by a real symmetric $2\times2$ matrix ${\bf a}=(a_{ij})$ with positive elements: we replace \eqref{semidefconst} by
\begin{equation}\label{semidefmu}
G_{ij}+\mu a_{ij}\id\geq 0,
\end{equation}
where $\mu\geq 0$ affects the deformation. The semidefinite program is now as follows:
\begin{align}
&\text{Minimize $\mu\geq 0$ over all operators $G_{ij}$}\label{program}\\
& \text{satisfying the constraints \eqref{equalityconst} and \eqref{semidefmu}.}\nonumber
 \end{align}
For ${\bf a}=0$ this reduces to the original decision problem of whether $\M$ and $\N$ are incompatible. It is easy to see that if $a_{ij}>0$ for at least one $(i,j)$, the program is feasible, i.e., for some $\mu\geq 0$ there exist four matrices $G_{ij}$ satisfying the constraints. We call such a $\mu$ \emph{admissible} [for the pair $(\M,\N)$ and a matrix ${\bf a}$]. We let $I_{\bf a}(\M,\N)$ denote the associated minimum value of $\mu$. It is clear that we can use the equality constraints to parametrize the four matrices $G_{ij}$ in terms of a single matrix, and perform the optimization over that.
The following proposition lists the main properties of $I_{\bf a}$; the proof is given in Appendix~\ref{app:proof}.

\begin{proposition}\label{incomp_prop}
$I_{\bf a}$ is an incompatibility monotone for each symmetric matrix ${\bf a}$ with positive elements. In addition, it has the following properties:
\begin{itemize} 
\item[(a)] If $M=\sum_n t_n M_n$ where $t_n>0$ and $\sum_n t_n=1$, then $I_{\bf a}(\M,\N)\leq \max_n I_{\bf a}(\M_n,\N)$.
\item[(b)] If $\mathcal H=\oplus_{n} \mathcal H_n$ and $\M=\oplus_n \M_n$, $\N=\oplus_n \N_n$, then $I_{\bf a}(\M,\N)=\max_n I_{\bf a}(\M_n,\N_n)$.
\item[(c)] If $V:\mathcal K\to \mathcal H$ is an isometric embedding of a Hilbert space $\mathcal K$ into $\mathcal H$, then $I_{\bf a}(V^*\M V,V^*\N V)=I_{\bf a}(\M,\N)$.
\item[(d)] $I_{\bf a}(\id-\M,\id-\N)=I_{\bf \tilde a}(\M,\N)$ where $\tilde a_{ij}=a_{i\oplus 1,j\oplus 1}$ with $\oplus$ being binary addition.
\item[(e)] Suppose that $\M$ and $\N$ are projective measurements. Let $\Theta$ be the set of angles $0<\theta<\pi$ for which $\frac 12(1+\cos\theta)$ belongs to the spectrum of the operator
$$\id-(M-N)^2.$$ (Note that eigenvalues $0$ and $1$ are excluded.) Then
$$
I_{\bf a}(M,N)=\sup_{\theta\in \Theta} I_{\bf a}(\P_0,\P_{\theta}),
$$
where $P_\theta=\frac 12(\id+\sin\theta\sigma_x+\cos\theta\sigma_z)$.
\end{itemize}
\end{proposition}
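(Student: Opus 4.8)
The unifying idea is that each assertion follows by tracking how the \emph{feasible families} $(G_{ij},\mu)$ of the program \eqref{program} transform under the operation in question, so the first step is to fix a workable description of the feasible set. Using the equality constraints \eqref{equalityconst} I would eliminate three of the operators, writing $G_{11}=X$, $G_{10}=M-X$, $G_{01}=N-X$ and $G_{00}=\id-M-N+X$, so that \eqref{semidefmu} becomes four affine operator inequalities in the single unknown $X$ and the scalar $\mu$. For fixed $\mu$ these cut out a bounded, weakly closed, convex set of admissible $X$, hence a compact one; since the feasible sets are nested as $\mu$ decreases, a routine compactness argument (extract a convergent net of near-optimal $X$ and note that positivity and the linear constraints pass to the limit) shows the infimum defining $I_{\bf a}(\M,\N)$ is attained. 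With this in hand the three monotone axioms are quick. Property (i) holds because compatibility is by definition feasibility at $\mu=0$, so $I_{\bf a}=0$ exactly when $\mu=0$ is admissible. For (iii), if $(G_{ij},\mu)$ is feasible for $(\M,\N)$ then $(\Lambda(G_{ij}),\mu)$ is feasible for $(\Lambda(\M),\Lambda(\N))$, because $\Lambda$ preserves the equality constraints by linearity and $\Lambda(G_{ij}+\mu a_{ij}\id)=\Lambda(G_{ij})+\mu a_{ij}\id\ge 0$ by positivity and unitality; minimising over $\mu$ gives the inequality. Property (ii) is exactly where symmetry of ${\bf a}$ enters: interchanging $\M$ and $\N$ transposes the outcome labels, $G_{ij}\mapsto G_{ji}$, which is a bijection of feasible families precisely when $a_{ij}=a_{ji}$.

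Parts (a), (b) and (d) are then proved by writing down explicit transformations of feasible families. For (a) I would take near-optimal families for each $(\M_n,\N)$, with parameters $\mu_n$, and form the convex combination $G_{ij}=\sum_n t_n G_{ij}^{(n)}$; the equality constraints survive because $\sum_n t_n M_n=M$ and $\sum_n t_n N=N$, while $G_{ij}+(\max_n\mu_n)a_{ij}\id\ge\sum_n t_n\bigl(G_{ij}^{(n)}+\mu_n a_{ij}\id\bigr)\ge 0$ since $a_{ij}\ge 0$, giving the bound with $\mu=\max_n\mu_n$. For (b) the identical bookkeeping with block-diagonal families $G_{ij}=\oplus_n G_{ij}^{(n)}$ yields $I_{\bf a}(\M,\N)\le\max_n I_{\bf a}(\M_n,\N_n)$, and the reverse inequality is the compression step of (c) applied blockwise. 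For (d) the relabelling $G'_{ij}=G_{i\oplus 1,\,j\oplus 1}$ is a bijection between feasible families for $(\id-\M,\id-\N)$ and for $(\M,\N)$ that converts the constraint with matrix ${\bf a}$ into the constraint with $\tilde a_{ij}=a_{i\oplus 1,j\oplus 1}$, which is the claim.

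Part (c) rests on compression: for an isometry $V$, the map $G_{ij}\mapsto V^*G_{ij}V$ sends a feasible family for $(\M,\N)$ on $\mathcal H$ to a feasible family for $(V^*\M V,V^*\N V)$ on $\mathcal K$ with the \emph{same} $\mu$, since $V^*V=\id$ preserves the equality constraints and $V^*(G_{ij}+\mu a_{ij}\id)V\ge 0$ because compressions of positive operators are positive. This already gives $I_{\bf a}(V^*\M V,V^*\N V)\le I_{\bf a}(\M,\N)$, which is also the reverse inequality needed to close (b). For the matching lower bound I would realise $V$ inside the reducing decomposition $\mathcal H=\ran V\oplus(\ran V)^\perp$ and apply (b); this returns the full value provided the complementary block carries no incompatibility, which is exactly the situation in which (c) is used (for instance when discarding a joint kernel or a trivial summand).

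Part (e) is the substantive one and I expect it to be the main obstacle, because it needs an external structural input rather than a transformation of feasible points. The plan is to invoke the two-projection theorem (Jordan's lemma): for projections $M,N$ the space decomposes into jointly invariant subspaces of dimension one and two. On a one-dimensional block $M$ and $N$ commute, so the pair is compatible and contributes $0$; on a two-dimensional block the pair is unitarily equivalent to $(\P_0,P_\theta)$ for some $0<\theta<\pi$, and the short computation $(M-N)^2=\tfrac12(1-\cos\theta)\id$ shows that $\id-(M-N)^2$ has eigenvalue $\tfrac12(1+\cos\theta)$ there, the values $0$ and $1$ corresponding precisely to the commuting one-dimensional blocks and hence excluded. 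Thus the angles that occur are exactly those collected in $\Theta$. Combining the block decomposition with (b) and the unitary invariance furnished by (iii) then yields $I_{\bf a}(M,N)=\sup_{\theta\in\Theta}I_{\bf a}(\P_0,P_\theta)$. The genuine work lies in the infinite-dimensional incarnation of this argument, where Jordan's lemma produces a direct integral rather than a finite direct sum and one must check that (b), together with the attainment and measurable-selection issues, extends to that setting; this is the step I would allocate the most care to.
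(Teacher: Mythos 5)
Your proposal follows essentially the same route as the paper's Appendix~A: every part is obtained by transporting feasible families $(G_{ij},\mu)$ of the SDP (convex combinations for (a), block-diagonal families and compressions $P_n G_{ij} P_n$ for (b), the relabelings $G_{ij}\mapsto G_{ji}$ and $G_{ij}\mapsto G_{i\oplus 1,j\oplus 1}$ for (ii) and (d), application of a unital positive map for (iii)), and part (e) rests on the same two-projection (Halmos) decomposition combined with (b) and (c), with the same level of brevity about the infinite-dimensional direct-integral case that the paper itself adopts. If anything you are slightly more careful than the paper on two points it passes over silently: the weak-compactness argument showing the infimum is attained (needed so that $I_{\bf a}(\M,\N)=0$ really yields admissibility of $\mu=0$ in property (i)), and the observation that the lower bound in (c) requires $\ran V$ to reduce $\M,\N$ with the complementary block carrying no extra incompatibility, which is precisely the situation in which (c) is invoked in the proof of (e).
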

Part (e) shows that for projections the calculation of $I_{\bf a}$ reduces to diagonalizing the operator $\id-(M-N)^2$, which is the central element of the algebra generated by the two projections \cite{Ha69,RaSi89}. 
Its spectrum (excluding $0$ and $1$) equals that of $MNM$ and $NMN$, which are often easier to diagonalize.

Interestingly, it turns out that incompatibility measures defined by the above SDP can always be expressed in terms of operational quantities related to a correlation experiments in the standard CHSH setup. 
Since the identity operator always satisfies the conditions \eqref{equalityconst} and \eqref{semidefmu} for large enough $\mu$, the program is \emph{strictly feasible}, and consequently, \emph{strong duality}  holds, i.e., $I_{\bf a}(\M,\N)$ coincides with the value given by the associated \emph{dual program} \cite{VaBo96}. 
The dual program can be written in terms of the CHSH quantities following the method of \cite{WoPeFe09}, where a special case was considered. 
We postpone the details of computation to Appendix~\ref{app:sdp}. 
The result is a scaled version of the CHSH inequality, 
\begin{equation}\label{dual4}
I_{\bf a}(\M,\N)=
\sup_{\psi,B_1,B_2}\frac{ \langle \psi | \half(\mathbb{B}-\id)\psi\rangle}{\langle \psi | (\id\otimes S_{\bf a})\psi\rangle},
\end{equation}
where the supremum is over all $\|\psi\|=1,\,-\id\leq B_1,B_2\leq \id$, the Bell operator $\mathbb B$ is defined in Eq.~\eqref{bellop} with $A_1=\id-2N$, $A_2=2M-\id$, and we have denoted
\begin{align*}
S_{\bf a}&=\half[ 
a_{00}(\id-B_2)+a_{11}(\id+B_2) +2a_{01}\id].
\end{align*}
Note that $S_{\bf a}$ depends only on Bob's measurements. We observe that $I_{\bf a}(\M,\N)=0$ (i.e., $\M$ and $\N$ are compatible) if and only if CHSH Bell inequality is not violated. The special case considered in \cite{WoPeFe09} is given by ${\bf a}=\frac 12 \id_2$, in which case $S_{\bf a}=\frac 12\id$, so that \eqref{dual4} exactly gives the maximum possible violation of the Bell inequality with Alice's measurements fixed to be $\M$ and $-\N$.

\section{Incompatibility monotones quantifying noise-robustness}
\label{sec:noiserob}
The monotones defined in the preceding section only have an operational meaning in the context of the CHSH Bell scenario, where incompatibility appears as (Alice's) local resource. We now proceed to define monotones with a \emph{direct} operational interpretation completely analogous to the \emph{noise-robustness} idea for quantum states discussed in the Introduction. This interpretation is independent of the Bell scenario. However, as we will see in the next section, these monotones actually turn out to be special cases of the SDP monotones of the preceding section.

\subsection{A simple model for classical noise}

We begin by the description of an \emph{addition of classical noise}, in the sense of random fluctuations on measurement devices, in analogy to preparation of states as discussed in the introduction: we deform a POVM $(M_i)$ into $\M_{\lambda,p}=(M_{\lambda,p;i})$, where
\begin{equation}\label{eq:mixing}
M_{\lambda,p;i} =(1-\lambda)M_i+\lambda p_i\id \, ,
\end{equation}
 where $(p_i)$ is a probability distribution and $0<\lambda<1$. This can be understood as follows: in each run of the experiment, the original quantum measurement $\M$ will only be realized with probability $1-\lambda$; otherwise the device just draws the outcome randomly from the fixed probability distribution $(p_i)$. Hence $\lambda$ describes the magnitude of the classical noise, and $p$ is its distribution.
 

For a binary POVM $\M=(M, \id-M)$, it is convenient to write $p_1=\tfrac 12 (1+b)$, $p_0=\tfrac 12(1-b)$ in terms of the bias parameter $b\in [-1,1]$; accordingly, we denote $\M_{\lambda,b}=\M_{\lambda,p}$ in this case. Then $\M_{\lambda,b}=(M_{\lambda,b}, \id-M_{\lambda,b})$ where
\begin{align}\label{eq:noisym}
M_{\lambda,b}&= (1-\lambda) M+\lambda \tfrac 12(1+b) \id.
\end{align}

\subsection{Quantifying incompatibility via noise robustness}

Since the observable $\M_ {\lambda,b}$ gets closer to a trivial POVM as $\lambda$ increases, any initially incompatible pair of binary POVMs $\M$ and $\N$ is expected to become compatible when both are modified according to \eqref{eq:noisym}, at some value of $\lambda$. 
Hence, the number
\begin{equation}\label{noiseI}
I^{\rm noise}_{b}(\M,\N):=\inf \{ \lambda\geq 0\mid \M_{\lambda,b},\N_{\lambda,b}\text{ compatible}\}
\end{equation}
provides an operational quantification of quantum incompatibility of a pair $(\M,\N)$.
Using arguments similar to ones in \cite{BuHeScSt13}, it is straightforward to show that this number is at most $1/2$. The $b$-optimized quantity $j(\M,\N):=1-\inf_b I^{\rm noise}_b(\M,\N)$  has been referred to as the \emph{joint measurability degree} of the POVMs $\M$ and $\N$ \cite{BuHeScSt13,HeScToZi14}; our specialty here is to investigate the case of fixed bias $b$.

For the sake of comparison, let us briefly consider another simple noise model where the measurements are modified by a quantum channel. This was used in \cite{UoMoGu14} to investigate EPR steerability.
Let $\Lambda_0(\cdot)={\rm tr}[(\cdot)]\id/d$ denote the completely depolarizing channel in dimension $d<\infty$, and set 
\begin{align*}
I_{\rm steer}(\M,\N):=\inf \{\lambda\geq 0\mid & \,(1-\lambda)\M +\lambda \Lambda_0(\M) \text{ and}\\
&\,(1-\lambda)\N +\lambda \Lambda_0(\N) \text{ compatible}\},
\end{align*}
for any binary measurements $\M$, $\N$. 
It is important to note that $I_{\rm steer}(\M,\N)$ is strictly smaller than $1$, at least for projective measurements $\N,\M$. In fact, it follows from the results of \cite{UoMoGu14,WiJoDo07} that $I_{\rm steer}(\M,\N)\leq (d-\sum_{n=1}^d \tfrac 1n)/(d-1)$.
From the results of \cite{UoMoGu14} it is furthermore clear that the quantity $I_{\rm steer}(\M,\N)$ can be interpreted as an \emph{operational quantification of incompatibility as a steering resource}, in the following sense: it is the maximal amount of noise that can be added to the maximally entangled state so that the resulting state is still steerable with Alice's measurements $\M$ and $\N$.

We also note that related convex-geometric robustness measures for incompatibility and steering appeared recently in \cite{erkka} and \cite{Sk14}.

The following result connects the noise-robustness approach to the general resource-theoretic ideas discussed above.
\begin{proposition} 
The functions $I_b^{\rm noise}$ and $I_{\rm steer}$ are incompatibility monotones.
\end{proposition}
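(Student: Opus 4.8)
The plan is to verify the three defining properties (i)--(iii) of an incompatibility monotone for each of the two functions, exploiting that both are infima over a noise strength $\lambda$ of the condition that a deformed pair be compatible. Writing $\Lambda_\lambda:=(1-\lambda)\,\mathrm{id}+\lambda\Lambda_0$ for the depolarizing-noise channel and recalling $\M_{\lambda,b}$ from \eqref{eq:noisym}, three structural facts do the work. First, \emph{adding noise to a compatible pair keeps it compatible}: if $\G=(G_{ij})$ is a joint POVM for $(\M,\N)$, then $G'_{ij}:=(1-\lambda)G_{ij}+\lambda\,p_ip_j\,\id$ (and its depolarizing analogue, with $p$ replaced by the normalized traces) is a joint POVM for the deformed pair, as one checks on the two marginals. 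Second, the noise families \emph{compose as one-parameter semigroups}, $\Lambda_{\lambda'}=\Lambda_{\lambda''}\circ\Lambda_\lambda$ with $1-\lambda'=(1-\lambda'')(1-\lambda)$, and likewise for $\M_{\lambda,b}$; combined with the first fact this shows the set of admissible $\lambda$ is up-closed, so each infimum is a genuine threshold. Third, we may freely use the result established in the excerpt that \emph{no channel creates incompatibility}.

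Property (ii) is immediate: compatibility is symmetric in its arguments and both noise models treat $\M$ and $\N$ identically, so the admissible sets and hence the infima are invariant under swapping. For (i), the forward implication is trivial since a compatible pair is already admissible at $\lambda=0$. For the converse I would show the set of compatible deformed pairs is \emph{closed}, so that the infimum is attained and the value $0$ forces compatibility at $\lambda=0$. Concretely, take $\lambda_n\downarrow 0$ with joint POVMs $\G^{(n)}$; all $G^{(n)}_{ij}$ are effects, hence lie in a weak-$*$ compact set by Banach--Alaoglu (valid in arbitrary dimension for $I^{\rm noise}_b$; ordinary compactness suffices for $I_{\rm steer}$ since $d<\infty$). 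A convergent subnet yields a limit $\G$ whose linear, weak-$*$-continuous marginal constraints hold and whose elements are positive, i.e.\ a joint POVM for $(\M,\N)$.

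Property (iii) for $I^{\rm noise}_b$ is clean. Since every channel $\Lambda$ is unital, $\Lambda(M_{\lambda,b})=(1-\lambda)\Lambda(M)+\lambda\tfrac12(1+b)\id=(\Lambda(\M))_{\lambda,b}$, so noise commutes \emph{exactly} with $\Lambda$. Hence if $\lambda$ is admissible for $(\M,\N)$, the pair $\M_{\lambda,b},\N_{\lambda,b}$ is compatible, and applying $\Lambda$ (which cannot create incompatibility) shows $(\Lambda(\M))_{\lambda,b},(\Lambda(\N))_{\lambda,b}$ compatible; thus $\lambda$ is admissible for $(\Lambda(\M),\Lambda(\N))$, giving $I^{\rm noise}_b(\Lambda(\M),\Lambda(\N))\le I^{\rm noise}_b(\M,\N)$.

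The same strategy for $I_{\rm steer}$ meets the one genuine obstacle, and this is where I expect the real work. Unitality gives only the \emph{one-sided} identity $\Lambda\circ\Lambda_0=\Lambda_0$, so $\Lambda(\Lambda_\lambda(M))=(1-\lambda)\Lambda(M)+\lambda\,\tr{M}\,\id/d$, whereas $I_{\rm steer}(\Lambda(\M),\Lambda(\N))$ involves $\Lambda_\lambda(\Lambda(M))=(1-\lambda)\Lambda(M)+\lambda\,\tr{\Lambda(M)}\,\id/d$. These coincide only when $\Lambda$ preserves the trace; in general they share the deterministic part $(1-\lambda)\Lambda(M)$ but carry trivial-noise terms of different bias $\tr{M}/d$ versus $\tr{\Lambda(M)}/d$, and re-biasing a trivial noise component does change joint measurability at fixed $\lambda$ (as one already sees for noisy $\sigma_x,\sigma_z$ qubit effects). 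Applying $\Lambda$ to the joint POVM supplied by admissibility therefore produces a joint measurement for the \emph{wrong} bias. I would attempt to close this gap by re-biasing: for $\lambda$ strictly above the threshold one can, via the semigroup law, factor off an extra depolarizing step and take a joint POVM with a uniform positivity margin $\varepsilon\,\id$, then add to the four blocks the multiples of $\id$ needed to shift the marginal biases while preserving positivity and the marginal constraints. The crux, and the main obstacle, is that this margin $\varepsilon$ shrinks to zero as $\lambda$ approaches the threshold while the required correction does not, so the bias-shift must be controlled \emph{uniformly down to the threshold}; I would resolve this either by a sharper choice of joint POVM or by passing to the equivalent steerability characterization of $I_{\rm steer}$ and invoking monotonicity of steering under local operations on the untrusted party.
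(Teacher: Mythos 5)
Your handling of properties (i) and (ii), and of monotonicity for $I_b^{\rm noise}$, is correct and is essentially the paper's own argument: unitality of $\Lambda$ gives the exact intertwining $\Lambda(\M_{\lambda,b})=\Lambda(\M)_{\lambda,b}$, and the image of a joint POVM under a positive unital map is a joint POVM for the image pair. Your Banach--Alaoglu compactness argument, showing the admissible set is closed so that $I=0$ forces compatibility at $\lambda=0$, is a legitimate fleshing-out of a step the paper records only as ``clear from the definitions,'' and your semigroup observation correctly justifies that the admissible sets are half-lines.

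The proposal is nevertheless incomplete, by your own admission: monotonicity of $I_{\rm steer}$ is never established, and neither of your two proposed escape routes is shown to work. To be fair, the delicate point you isolate is genuinely present: the paper's one-line proof invokes $\Lambda\circ\Lambda_0=\Lambda_0$, which, applied to a joint POVM for the deformed pair, directly yields compatibility of $(1-\lambda)\Lambda(\M)+\lambda\Lambda_0(\M)$, whereas the definition of $I_{\rm steer}(\Lambda(\M),\Lambda(\N))$ requires compatibility of $(1-\lambda)\Lambda(\M)+\lambda\Lambda_0(\Lambda(\M))$; these coincide precisely when $\tr{\Lambda(M)}=\tr{M}$, i.e.\ when the Heisenberg channel is also trace-preserving (equivalently, the Schr\"odinger channel is unital), and in that regime your argument, like the paper's, is complete. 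But for a general unital $\Lambda$ your $\varepsilon$-margin re-biasing collapses at the threshold exactly as you say, and the fallback to ``monotonicity of steering under local operations on the untrusted party'' does not close the gap either: compatibility of the deformed pair supplies a local-hidden-state model only for the \emph{specific} assemblage generated on the isotropic state by $\M,\N$, not unsteerability of that state for all measurements; composing Alice's measurements with $\Lambda$ amounts to applying $\Lambda_*$ to her share before measuring, which produces a \emph{different} assemblage for which no model is supplied --- this reformulates the original problem rather than reducing it to a known monotonicity. As it stands, the $I_{\rm steer}$ half of the proposition is asserted, not proved; a correct submission should either restrict attention to channels with $\tr{\Lambda(M)}=\tr{M}$ (where the paper's identity does the work) or supply the missing compatibility argument for the re-biased deformation.
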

\begin{proof}
It is clear from the definitions that both $I_b^{\rm noise}$ and $I_{\rm steer}$ satisfy (i) and (ii).
To prove (iii), it is enough to make the following observation.
If two POVMs $\M$ and $\N$ are compatible, then $\Lambda(\M)$ and $\Lambda(\N)$ are compatible for any channel $\Lambda$.
Namely, if $\G$ is a joint POVM of $\M$ and $\N$, then $\Lambda(\G)$ is a joint POVM of $\Lambda(\M)$ and $\Lambda(\N)$. 
Since $\Lambda(\M_{\lambda,b})=\Lambda(\M)_{\lambda,b}$ and $\Lambda\circ \Lambda_0=\Lambda_0$ for any channel $\Lambda$, the monotonicity (iii) holds for $I_b^{\rm noise}$ and $I_{\rm steer}$, respectively.
\end{proof}

In the rest of the paper, we concentrate on the monotones $I_b^{\rm noise}$. Further properties of the steering incompatibility monotone $I_{\rm steer}$ will be investigated in a separate publication. 

\section{Analysis of noise-based monotones}

In this section, we analyze the monotones $I_b^{\rm noise}$ of the preceding section in detail. We begin by showing that they are essentially equivalent to the SDP-computable monotones introduced in Sec. \ref{sec:Ia}. In fact, note first that if $I$ is an incompatibility monotone and $f:[0,\infty)\to [0,\infty)$ is a strictly increasing function with $f(0)=0$, then the composite function $f\circ I$ is also an incompatibility monotone. The following proposition shows that every SDP-computable monotone $I_{\bf a}$ reduces to $I_b^{\rm noise}$ in this way. 
\begin{proposition}\label{propnoise}
Fix a symmetric matrix ${\bf a}$, denote $a=\sum_{i,j}a_{ij}$, and $f_{a}(\mu)=a\mu/(1+ a\mu)$ for all $\mu\geq 0$. Then $$I^{\rm noise}_b=f_{a}\circ I_{\bf a},$$
where $b=2(a_{11}+a_{01})/a-1$.
\end{proposition}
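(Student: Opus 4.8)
The plan is to directly compare the two defining optimization problems — the SDP for $I_{\bf a}$ and the noise-compatibility threshold for $I^{\rm noise}_b$ — by showing that their feasible sets coincide under the substitution $\mu \leftrightarrow \lambda/(1-\lambda)$. First I would write down explicitly what compatibility of the noisy pair $\M_{\lambda,b}, \N_{\lambda,b}$ means. By definition, these are compatible iff there exist positive operators $\tilde G_{ij}\geq 0$ satisfying the equality constraints with $M,N$ replaced by $M_{\lambda,b}$ and $N_{\lambda,b}$ from \eqref{eq:noisym}. The key algebraic step is to subtract the noise terms off: writing the noisy marginals as $(1-\lambda)M + \lambda\tfrac12(1+b)\id$ and similarly for $N$, I would guess that any joint POVM $\tilde\G$ for the noisy pair can be written as $\tilde G_{ij} = (1-\lambda)G_{ij} + \lambda\, q_{ij}\id$ for a suitable product-like distribution $q_{ij}$ determined by $b$, and conversely. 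The natural candidate, forced by the marginal conditions, is $q_{ij}$ given by the independent bias distribution $q_{11}=\tfrac14(1+b)^2$, etc.; but since $\tilde\G$ is only constrained through its marginals, the correct move is to parametrize $\tilde G_{ij} = (1-\lambda)G_{ij} + \lambda\, c_{ij}\id$ where $(c_{ij})$ is any distribution with the prescribed marginals $\tfrac12(1\pm b)$, and then match $c_{ij}$ to the deformation matrix $a_{ij}/a$.

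The heart of the argument is the following equivalence, which I would state as a lemma inside the proof: for fixed $\lambda\in(0,1)$, the noisy pair is compatible if and only if there exist operators $G_{ij}$ satisfying the original equality constraints \eqref{equalityconst} together with $G_{ij} + \tfrac{\lambda}{1-\lambda}c_{ij}\id \geq 0$, where $(c_{ij})$ ranges over distributions with marginals $\tfrac12(1\pm b)$. To see the forward direction, given $\tilde G_{ij}\geq 0$ for the noisy pair, I set $G_{ij} := \tfrac{1}{1-\lambda}\bigl(\tilde G_{ij} - \lambda c_{ij}\id\bigr)$ with $c_{ij}$ chosen as the marginal-compatible distribution extracted from the noise; one checks the marginals of $(G_{ij})$ collapse exactly to the noise-free $M,N$, and positivity of $\tilde G_{ij}$ becomes precisely the deformed constraint $G_{ij}+\tfrac{\lambda}{1-\lambda}c_{ij}\id\geq0$. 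The reverse direction just runs this substitution backwards. The upshot is that $\lambda$ is admissible in the noise problem iff $\mu=\tfrac{\lambda}{1-\lambda}$ is admissible in the SDP \eqref{program} with deformation matrix proportional to $(c_{ij})$.

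Next I would pin down the precise correspondence between the bias $b$ and the matrix ${\bf a}$. Since the SDP \eqref{semidefmu} uses a fixed matrix ${\bf a}=(a_{ij})$ while the noise model fixes only the marginals $\tfrac12(1\pm b)$ of the added distribution, I expect the optimal $\lambda$ to select the specific $(c_{ij})\propto(a_{ij})$, giving $\sum_{ij}c_{ij}=1$ after normalizing by $a=\sum_{ij}a_{ij}$, and the marginal condition $c_{11}+c_{01}=\tfrac12(1+b)$ yielding exactly $b=2(a_{11}+a_{01})/a-1$ as claimed. The deformation parameter then scales as $\mu \, a_{ij} = \tfrac{\lambda}{1-\lambda}c_{ij} = \tfrac{\lambda}{1-\lambda}\,\tfrac{a_{ij}}{a}$, so $\mu = \tfrac{\lambda}{a(1-\lambda)}$, which I would invert to solve for $\lambda$ in terms of $\mu$. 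Taking the infimum over $\lambda$ on one side and the minimum over $\mu$ on the other, and using that $\lambda\mapsto\tfrac{\lambda}{1-\lambda}$ is a strictly increasing bijection, gives $I^{\rm noise}_b = f\circ I_{\bf a}$ for the resulting $f$.

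The final bookkeeping step is to verify that this $f$ is exactly $f_a(\mu)=a\mu/(1+a\mu)$. From $\mu=\tfrac{\lambda}{a(1-\lambda)}$ I solve to get $a\mu(1-\lambda)=\lambda$, hence $a\mu = \lambda(1+a\mu)$, i.e.\ $\lambda = a\mu/(1+a\mu)=f_a(\mu)$, confirming the stated form; the conditions $f_a(0)=0$ and strict monotonicity are immediate, which is consistent with the remark preceding the proposition that composing a monotone with such an $f$ again yields a monotone. The main obstacle I anticipate is the matching in the second step: I must be careful that the noise problem only constrains the marginals of the added distribution, so a priori it optimizes over all $(c_{ij})$ with fixed marginals, whereas $I_{\bf a}$ fixes the full matrix; the point to verify is that at the optimum the freedom in the off-diagonal/diagonal split of $(c_{ij})$ does not help, so that the extremal noise distribution is forced to be proportional to $(a_{ij})$ — equivalently, that enlarging the deformation in any direction consistent with the marginals cannot lower the admissible $\mu$. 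I would handle this by showing the deformed positivity constraint is monotone in each $c_{ij}$ and that the marginal constraints leave exactly the one-parameter family matching $(a_{ij})$ as the relevant extreme point, so the infimum is attained precisely there.
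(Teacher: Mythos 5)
Your core argument is the same as the paper's. The paper defines, for each $\mu\geq 0$, the bijection $\tilde G_{ij}=(1+\mu a)G_{ij}-\mu a_{ij}\id$ between four-tuples of operators and observes that, with $\lambda=f_a(\mu)$, one tuple satisfies \eqref{equalityconst} and \eqref{semidefconst} for the noisy pair $(\M_{\lambda,b},\N_{\lambda,b})$ if and only if the other satisfies \eqref{equalityconst} and \eqref{semidefmu} for $(\M,\N)$; up to the overall normalization $1+\mu a=(1-\lambda)^{-1}$ this is precisely your substitution $\tilde G_{ij}=(1-\lambda)G_{ij}+\lambda c_{ij}\id$ with $c_{ij}=a_{ij}/a$, and your bookkeeping $\mu=\lambda/(a(1-\lambda))$, $\lambda=f_a(\mu)$, $b=2(a_{11}+a_{01})/a-1$ is correct.

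The one thing to fix is that the ``main obstacle'' you anticipate in your final paragraph does not exist, and the lemma you propose there (that the optimum forces $(c_{ij})\propto(a_{ij})$) is both unnecessary and not true in the form you state it. The point you are missing is that the operators $G_{ij}$ in the deformed SDP are \emph{not} required to be positive --- only $G_{ij}+\mu a_{ij}\id\geq 0$ --- so given a joint POVM $\tilde\G$ of the noisy pair you never need to ``extract'' a noise distribution from it. Fix $c_{ij}=a_{ij}/a$ once and for all: since ${\bf a}$ is symmetric and $b=2(a_{11}+a_{01})/a-1$, this $(c_{ij})$ has exactly the marginals $\tfrac12(1\pm b)$ demanded by \eqref{eq:noisym}, so the affine map above is a bijection of \emph{all} four-tuples of operators that exchanges the equality constraints for $(M,N)$ with those for $(M_{\lambda,b},N_{\lambda,b})$ and exchanges $\tilde G_{ij}\geq 0$ with $G_{ij}+\mu a_{ij}\id\geq 0$. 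Hence for this single fixed split, ``$\lambda$ admissible in the noise problem iff $\mu=\lambda/(a(1-\lambda))$ admissible in \eqref{program}'' is already an equivalence, and no optimization over distributions with fixed marginals ever arises. Indeed, \emph{every} member $c$ of that marginal family yields such a bijection, reducing the same noise problem to the SDP with deformation matrix $c$; consistently with the proposition, all these matrices share the same $a$ and $b$, so composing with $f_a$ gives the same value in each case. Nothing singles out $c\propto{\bf a}$ as an extreme point, and pursuing your proposed monotonicity/extreme-point argument would have you chasing a claim that is neither needed nor correct as stated.
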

\begin{proof} For each $\mu\geq 0$, define a one-to-one map $(G_{ij})\mapsto (\tilde G_{ij})$ between four-tuples of operators via
\begin{align*}
\tilde G_{ij} &=(1+\mu a)G_{ij}-\mu a_{ij}\id.
\end{align*}
Putting then $\lambda:=f_{a}(\mu)$ we see that $(G_{ij})$ satisfies \eqref{semidefconst} and \eqref{equalityconst} for the pair $(\M_{\lambda, b},\N_{\lambda, b})$, if and only if $(\tilde G_{ij})$ satisfies \eqref{semidefmu} and \eqref{equalityconst} for $(\N,\M)$. From this the claim follows immediately.
\end{proof}

The following analogy to entanglement quantification is worth noting at this point: for a given state $\rho$, and a fixed separable state $\rho_0$, the authors of \cite{ViTa99} call the minimum value of $\mu$ for which $[1-f_{a=1}(\mu)]\rho+f_{a=1}(\mu) \rho_0$ is entangled \emph{the robustness of $\rho$ relative to $\rho_0$}.

Due to the above proposition, study of the SDP-computable monotones $I_{\bf a}$ can, without loss of generality, be restricted to the special case where ${\bf a}$ is diagonal, with $a_{11}=\tfrac 12(1+b)$ and $a_{00}=\tfrac12(1-b)$, and done using the noise-based monotone $I_{b}^{\rm noise}$. Furthermore, Proposition~\ref{incomp_prop} (e) shows that in the case where $M,N$ are projections, the quantity $I_{b}^{\rm noise}(\M,\N)$ is completely determined by the spectrum of $\id-(M-N)^2$, and the special values $I_{b}^{\rm noise}(\P_0,\P_\theta)$.

Accordingly, we now proceed to investigate these values in detail. The restriction to projections is to some extent justified by the intuition that projections represent \emph{sharp} quantum measurements with no intrinsic noise. (This terminology can be made precise in various ways, see e.g.~\cite{BuLaMi96}.) We can find $I_b^{\rm noise}(\P_0,\P_\theta)$ using the definition \eqref{noiseI}, together with the known characterization of compatibility of binary qubit measurements \cite{Busch}; see also \cite{TeikoDaniel,YLLO}, and a generalization \cite{us} by the authors of the present paper. The result is as follows (see Appendix~\ref{app:qubit} for derivation): $\lambda=I_b^{\rm noise}(\P_0,\P_\theta)$ is the unique solution $0\leq \lambda \leq 1/2$ of the equation
\begin{equation}\label{comp_eq}
[(1-\lambda)^2\cos\theta-\lambda^2b^2]^2=2(1-\lambda)^2-1+2\lambda^2b^2.
\end{equation}
Representative solutions are plotted in Fig.~\ref{fig:theta}.
\begin{figure}
\begin{center}
\includegraphics[scale=0.8]{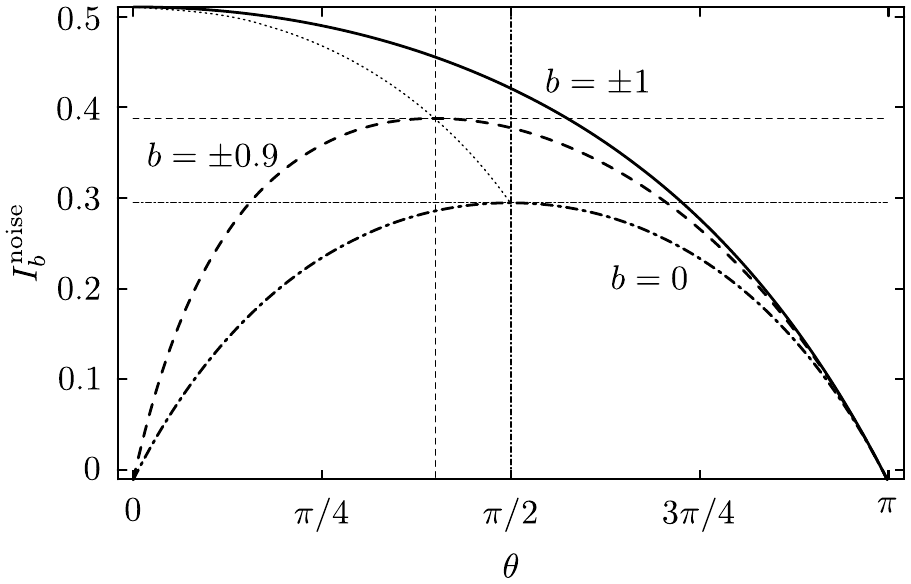}
\end{center}
\caption{\label{fig:theta}The dependence of $I^{\rm noise}_b(\P_0,\P_\theta)$ on the angle $\theta$ between the Bloch vectors of the two qubit projections for various choices of $b$. For $b=\pm1$ we can observe discontinuity at $\theta=0$, where $I^{\rm noise}_b=0$. The dotted line depicts the set of maxima for different choices of $b$, showing that for $\theta>\pi/2$ the incompatibility never reaches the largest value for given $b$. }
\end{figure}
Two interesting special cases, namely $b=0$ and $b=\pm 1$ can be solved analytically for each $\theta$; we get
\begin{align*}
I^{\rm noise}_{b=0}(\P_0,\P_\theta)&=1-(1+\sin\theta\,)^{-\frac 12},\\
I^{\rm noise}_{b=\pm 1}(\P_0,\P_\theta) &= 1-\left[1+ \sqrt{(1+\cos\theta)/2}\,\right]^{-1}.
\end{align*}
For general projective measurements, the value $I^{\rm noise}_b(\M,\N)$ is obtained by maximising over those $\theta$ for which $\tfrac 12(1+\cos\theta)$ is an eigenvalue of $\id-(M-N)^2$; see Fig.~\ref{fig:qubit} below. Concerning the above special cases, let us first take the unbiased one for $b=0$, which by \eqref{dual4} gives exactly the maximum possible CHSH violation. We get
\begin{equation}\label{Inolla}
I^{\rm noise}_{b=0}(\M,\N) =1-(1+2\no{[M,N]})^{-\frac 12},
\end{equation}
due to the fact that $\no{[M,N]}=\sup_{\theta\in \Theta}\no{[P_0,P_\theta]}$. Hence, $I^{\rm noise}_0(\M,\N)$ is a function of the commutator of the projections, as expected from the known properties of the CHSH operator $\mathbb B$ (see, e.g., \cite{KiWe10}). At the other extreme, the maximally biased case $I^{\rm noise}_{b=\pm 1}(\P_0,\P_\theta)$ is an increasing function of the eigenvalues of $MNM$ (excluding $0$ and $1$), and hence
\begin{equation}\label{Iminus}
I^{\rm noise}_{b=\pm 1}(\M,\N) =\begin{cases} 1-(1+\no{MNM})^{-1}, & [M,N]\neq 0,\\ 0, & [M,N]=0.\end{cases}  
\end{equation}
Note that in the special case where the projections commute, the spectrum of $MNM$ only has values $0$ and $1$, which are excluded, hence the discontinuity (see Fig.~\ref{fig:theta}).

Another important aspect is the apparent monotonicity of $I^{\rm noise}_b(\P_0,\P_\theta)$ in $|b|$ for $\theta\in(0;\pi)$; see Figs.~\ref{fig:theta} and \ref{fig:qubit}. That this indeed holds for all projective measurements $\M$ and $\N$, i.e., for $I_b^{\rm noise}(\M,\N)$, is proved in Appendix~\ref{app:monotonicity}. This shows that \emph{the noise robustness of incompatibility of any pair of measurements increases when the noise is biased}. Interestingly, as we see from Fig.~\ref{fig:theta}, this effect becomes dramatic when the measurements are close to commutative; the difference is best reflected in the extreme cases \eqref{Inolla} and \eqref{Iminus} which differ maximally (i.e., by $1/2$) at the commutative limit.

\section{Maximal incompatibility}
\label{sec:maxinc}

Having investigated the detailed structure of the incompatibility monotones $I^{\rm noise}_{b}$, it is now natural to ask which pairs of effects are \emph{maximally incompatible} in this sense. From the quantum resource point of view, it corresponds to the following question: \emph{which pairs of binary quantum measurements are most robust against noise?}

\subsection{Generalized Tsirelson bound}

We proceed to derive maximal incompatibility for all the concrete monotones considered above. Given any SDP-computable monotone $I_{\bf a}$ and the corresponding noise-based one $I_b^{\rm noise}=f_a\circ I_{\bf a}$ as in Proposition~\ref{propnoise}, we begin with the following observation: from Proposition~\ref{incomp_prop}(e) and \ref{incomp_prop}(a) it follows that
\begin{align}
I_{\rm max}({\bf a}) &:=\sup_{\M,\N} I_{\bf a}(\M,\N)=\sup_{\theta\in (0,\pi)} I_{\bf a}(\P_0,\P_\theta)\\
& =f_a^{-1}\Big(\sup_{\theta\in (0,\pi)} I_b^{\rm noise}(\P_0,\P_\theta)\Big),
\end{align}
where we have also used the fact that every effect is a convex combination of projections, and the monotonicity of $f_a$. Hence, maximal incompatibility can be determined from the special values $I_b^{\rm noise}(\P_0,\P_\theta)$ studied in the preceding section.
More specifically, we only need to investigate Eq.~\eqref{comp_eq}; the maximal value turns out to be the left root of the quadratic polynomial on the right-hand side. This gives $I^{\rm noise}_{b}(P_0,P_\theta)\leq I^{\rm noise}_{\rm max}(b)$, where
\begin{equation}\label{maxincomp}
I^{\rm noise}_{\rm max}(b):=\frac{1}{2+\sqrt{2(1-b^2)}}
\end{equation}
and this value is attained for the unique $\theta=\theta_b$ which fulfills $\cos\theta_b=\frac 12 [1-I^{\rm noise}_{\rm max}(b)]^{-2}-1$.
Using Proposition~\ref{incomp_prop}~(e), we now immediately obtain the following result:
\begin{proposition}\label{thm:spectrum}
For any pair of effects $(M,N)$ we have
\begin{equation}\label{bound}
I^{\rm noise}_b(M,N)\leq I^{\rm noise}_{\rm max}(b).
\end{equation}
If $M$ and $N$ are projections, then the equality 
\begin{equation}\label{eq:jb=tb}
I_{b}(M,N)=I^{\rm noise}_{\rm max}(b)
\end{equation}
holds if and only if the spectrum of the operator $\id-(M-N)^2$ contains the point
 \begin{equation}
\label{eq:eigenvaluemax}
\chi_b=\tfrac 14 (1-I^{\rm noise}_{\rm max}(b))^{-2} \, .
\end{equation}
\end{proposition}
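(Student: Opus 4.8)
The plan is to package the two claims as consequences of the one-parameter reduction already isolated before the statement, the only genuinely new ingredient being a spectral translation of the optimal angle $\theta_b$. Throughout I use that $I_b^{\rm noise}=f_a\circ I_{\bf a}$ with $f_a$ strictly increasing (Proposition~\ref{propnoise}), so that suprema commute with the passage between $I_{\bf a}$ and $I_b^{\rm noise}$.

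For the inequality \eqref{bound} the reduction is already in place: the discussion preceding the statement shows, via Proposition~\ref{incomp_prop}(a) and (e) and the fact that every effect is a convex combination of projections, that $\sup_{M,N}I_{\bf a}(M,N)=\sup_{\theta\in(0,\pi)}I_{\bf a}(\P_0,\P_\theta)$. It therefore suffices to maximise $\lambda=I_b^{\rm noise}(\P_0,\P_\theta)$ over $\theta\in(0,\pi)$, where $\lambda$ is the root of \eqref{comp_eq} in $[0,1/2]$. I would treat \eqref{comp_eq} as a curve in the $(\lambda,\cos\theta)$-plane and locate its point of largest $\lambda$: implicit differentiation shows that at an interior extremum the derivative of the perfect square on the left must vanish, i.e. $(1-\lambda)^2\cos\theta-\lambda^2b^2=0$, which collapses \eqref{comp_eq} to $2(1+b^2)\lambda^2-4\lambda+1=0$. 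Taking the smaller root and rationalising identifies it with $I^{\rm noise}_{\rm max}(b)=1/(2+\sqrt{2(1-b^2)})\in(0,1/2]$, while the larger root exceeds $1$ and is discarded; comparing with the boundary value $\lambda\to0$ as $\theta\to0$ confirms this is the maximum. This proves \eqref{bound} and, via the same stationarity relation, gives $\cos\theta_b=\tfrac12(1-\lambda)^{-2}-1$ at the unique maximiser $\theta_b$ (unique since $\cos$ is injective on $(0,\pi)$).

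For the equality statement with $M,N$ projective I would transport Proposition~\ref{incomp_prop}(e) through $f_a$ to get $I_b^{\rm noise}(M,N)=\sup_{\theta\in\Theta}I_b^{\rm noise}(\P_0,\P_\theta)$, with $\Theta$ the set of $\theta$ such that $\tfrac12(1+\cos\theta)\in\spec{\id-(M-N)^2}$. Since the global maximum over $(0,\pi)$ is attained only at $\theta_b$, equality \eqref{eq:jb=tb} holds precisely when $\theta_b\in\Theta$, that is, when $\tfrac12(1+\cos\theta_b)$ lies in the spectrum. Substituting $\cos\theta_b=\tfrac12(1-\lambda)^{-2}-1$ gives $\tfrac12(1+\cos\theta_b)=\tfrac14(1-I^{\rm noise}_{\rm max}(b))^{-2}=\chi_b$, which is the asserted condition \eqref{eq:eigenvaluemax}.

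The main obstacle is the interior optimisation of \eqref{comp_eq}: one must check that the stationary point really is a maximum lying in the admissible window $[0,1/2]$, select the correct (left) root, and rule out the root above $1$. Delicacy also appears at $b=\pm1$, where $\cos\theta_b=1$ forces the maximiser to the excluded commutative limit $\theta=0$ and $\chi_b=1$ is an excluded spectral value; this is the discontinuity visible in Fig.~\ref{fig:theta}, and I would treat it separately as a boundary case consistent with \eqref{Iminus}.
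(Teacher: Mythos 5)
Your proposal is correct and follows essentially the same route as the paper: the paper likewise reduces, via Proposition~\ref{incomp_prop}(a),(e) and the convex decomposition of effects into projections, to the qubit values $I_b^{\rm noise}(\P_0,\P_\theta)$, identifies the maximum of \eqref{comp_eq} as the left root of the quadratic $2(1+b^2)\lambda^2-4\lambda+1=0$ (exactly the collapse you obtain by forcing the squared term to vanish), and then invokes Proposition~\ref{incomp_prop}(e) for the spectral characterization, so your stationarity computation and your separate treatment of the $b=\pm1$ boundary merely make explicit what the paper asserts without detail. One cosmetic slip: the larger root $1/(2-\sqrt{2(1-b^2)})$ exceeds $1$ only when $b^2<1/2$, but since it is always $\geq 1/2$ (coinciding with the left root only at $b^2=1$) it still lies outside the admissible window, so your conclusion is unaffected.
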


Using the dual program \eqref{dual4} of the corresponding SDP \eqref{semidefmu}, and the fact that $I_b^\mathrm{noise}=f\circ I_{\bf a}$ (Proposition~\ref{propnoise}), we get from \eqref{bound} a tight inequality
$$
\frac{\langle \psi | (\mathbb B-\id)\psi\rangle}{\langle \psi | \id\otimes (\id+b B_2)\psi\rangle}\leq \frac{1}{1+\sqrt{2(1-b^2)}},
$$
for arbitrary choices of $\psi$, $A_1$, $A_2$, $B_1$, and $B_2$. Since the case $b=0$ reduces to Tsirelson's inequality, this can be regarded as a generalization of that well-known bound for quantum correlations.

For the qubit case $I^{\rm noise}_b(\P_0,\P_\theta)=I^{\rm noise}_{\rm max}(b)$ is attained for a specific $b^2$, provided that $\theta\leq \pi/2$. 
This is depicted in Fig.~\ref{fig:qubit}. If $\theta>\pi/2$, the maximum is never attained (see Fig.~\ref{fig:theta}). 
It is also instructive to reinterpret this via the following more general situation: we test whether the state of a quantum system is one of two given states $\varphi$ or $\psi$. Then $M=|\varphi\rangle\langle \varphi|$, $N=|\psi\rangle\langle \psi|$, so $MNM$ has only one eigenvalue $F^2:=|\langle \psi|\varphi\rangle|^2$, where $F$ is the fidelity. By Proposition~\ref{incomp_prop} (e) the corresponding angle $\theta_F=\arccos(2F^2-1)$ then determines the incompatibility $I_b^{\rm noise}(\M,\N)=I_b^{\rm noise}(\P_0,\P_{\theta_F})$. It is important to note that even though this depends only on the fidelity $F$ as expected, it is not monotonic in $F$; incompatibility does not measure distance between the vectors. This is evident in Fig.~\ref{fig:theta}: in the orthogonal case $F=0$ the measurements are compatible, and as $F^2$ increases to $1/2$, also $I_b^{\rm noise}$ increases. At a certain point $1/2\leq F^2\leq 1$, incompatibility starts to decrease (except in the discontinuous case $b=-1$), and compatibility holds again at perfect fidelity $F=1$.

In higher-dimensional problems the value of any incompatibility measure $I^{\rm noise}_b(\M,\N)$ at a pair of projections is determined as the supremum of $I^{\rm noise}_b(\P_0,\P_\theta)$ where $\theta$ takes all values in the spectrum of $NMN$ or, equivalently, $MNM$, except $0$ and $1$. This means that $\theta$ takes at most $\min \{\rank(M),\rank(N)\}$ values; this is the maximum number of different values of $b^2$ for which $I_b^{\rm noise}(M,N)$ can be maximal for a given pair of projections. Note that the number of different values of $\theta$ depends not only on the rank of the projections but also on the dimension of the ambient space; for instance if $\rank(M)=\rank(N)=2$, and the projections both act on three-dimensional space, then the intersection of the subspaces is necessarily spanned by one nonzero vector $\varphi$, which is therefore an eigenvector of $MNM$ with eigenvalue $1$, implying that there is room for only one $\theta$ value.

We now proceed to consider a systematic scheme of implementing different $\theta$ values using a restricted set of unitary operations.

\begin{figure}
\begin{center}
\includegraphics[scale=0.8]{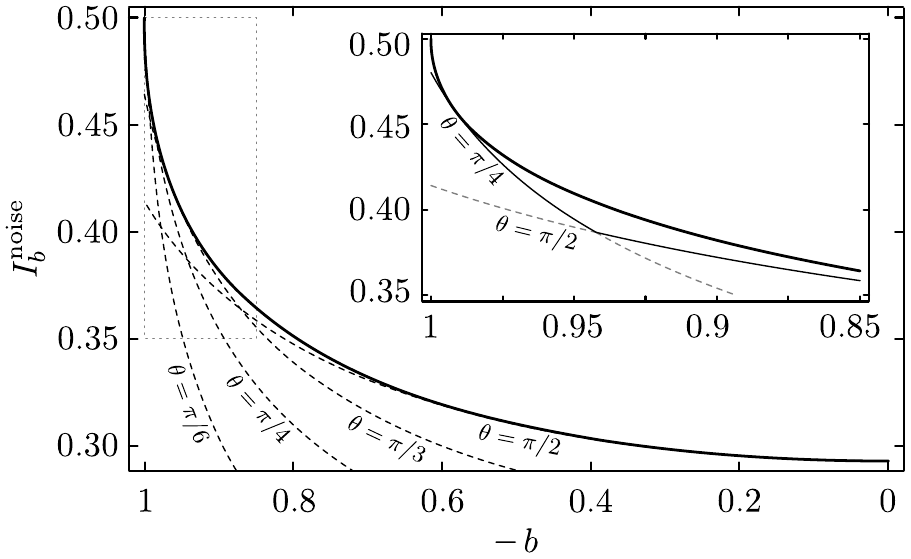}
\end{center}
\caption{\label{fig:qubit} The dependence of $I^{\rm noise}_b(\P_0,\P_\theta)$ on $b$ (on this plot shown negative $b$'s) for different choices of $\theta$ (dashed lines). Solid line represents the supremum over all choices of $\theta$, i.e., the $I^\mathrm{noise}_{\rm max}(b)$. The inset depicts the scaled area within the dotted rectangle for better visibility --- it depicts the situation of a four-dimensional case of Eq.~(\ref{circuitincomp}) and the following scheme for $n=2$. When the number of points $\theta$ in the spectrum (as well as $n$) increases, the supremum approaches the curve $I_{\rm max}(b)$.}
\end{figure}

\subsection{A quantum circuit implementation of maximal incompatibility}

In view of practical applications, it is crucial that quantum resources can be experimentally implemented. Clearly, any binary projective measurement can be realized by acting on a quantum state by a unitary operator and then measuring in a computational basis. In a realistic experiment, only certain unitaries (often called \emph{gates} in the context of quantum computation) can be actually implemented. Typical implementable gates are one-qubit rotations and two-qubit controlled rotations; they form a universal set which can be used to implement all unitary operations, and hence also all projective binary measurements, via suitable circuits. See, e.g., \cite{ZuEtal} for a recent work on experimental implementations.

Now, suppose that incompatibility appears as a resource in an experimental setting where only one-qubit rotations and two-qubit controlled rotations can be implemented. It is then crucial to know how maximal incompatibility (i.e., maximal noise robustness of the resource) can be achieved using these gates. The purpose of this subsection is to give an example of a circuit that does this, independently of the specifics of the experiment. A detailed and general study of efficiently implementing maximal incompatibility is beyond the scope of the paper.

Starting from a one-qubit system, the above pair $(P_0,P_\theta)$ can be understood as follows: one measures either directly in the computational basis, or performs first unitary quantum gate $U_\theta=\cos(\theta/2)\id+\sin(\theta/2)\sigma_y$. In this way the incompatibility $I_b^{\rm noise}(P_0,P_\theta)=I_b^{\rm noise}(P_0,U_\theta^*P_0U_\theta)$ can be thought of as being \emph{generated by a quantum gate}, and the same idea can of course be applied to more complicated systems. 
In fact, if $M$ and $N$ are arbitrary projections of the same dimension, we can always find a unitary $U$ such that $I_b^{\rm noise}(M,N)=I_b^{\rm noise}(M_0,U^*NU)$ where $M_0$ is diagonal in the computational basis.

Consider now an $n$-qubit system, with the basis measurement $M_0=\id^{\otimes n-1}\otimes |1\rangle\langle 1|$, i.e., only the last qubit is measured in the computational basis. In addition, assume that we can perform the Pauli-$x$ gate $\sigma_x$ on all qubits except the last one, and controlled rotations $CU_\theta$ which does $U_\theta$ on the $n$-th qubit if all the others are in state $|1\rangle$. 
Let $I_{n-1}$ denote the set of all binary sequences of length $n-1$, and for each ${\bf i}\in I_{n-1}$ define $W_{{\bf i}}(\theta)=X(i_1)\cdots X(i_{n-1})CU_{\theta}X(i_{n-1})\cdots X(i_1)$, where $X(i_k)$ is $\sigma_x$ on the $k$-th qubit if $i_k=1$, and identity otherwise. The Hilbert space decomposes into the direct sum $\hi=\oplus_{\bf i} \hi_{\bf i}$, where $\hi_{\bf i}={\rm span}\{|{\bf i}0\rangle, |{\bf i}1\rangle\}$, and $M_0=\oplus_{\bf i} |{\bf i}1\rangle\langle {\bf i}1|$, $W_{\bf i}(\theta)=\id\oplus\cdots\oplus\id\oplus U_{\theta}\oplus\id\oplus\cdots\id$, with $U_{\theta}$ in the ${\bf i}$-th block. We then choose for each ${\bf i}$ a value $\theta_{\bf i}\in [0,\pi/2]$, and set $W=\prod_{\bf i} W_{\bf i}(\theta_{\bf i})$. By Proposition~\ref{incomp_prop}(e), it follows that
\begin{equation}\label{circuitincomp}
I_{b}^{\rm noise}(M_0, W^*M_0W)=\max_{\bf i} I_b^{\rm noise}(P_0,U_{\theta_{\bf i}}^*P_0U_{\theta_{\bf i}}).
\end{equation}
If we choose all values $\theta_{\bf i}$ different, we can use this circuit to create a pair of measurements maximally incompatible for $2^{n-1}$ different choices of $b^2$. This situation is depicted in Fig.~\ref{fig:qubit}. 
As each of the curves touches the $I^{\rm noise}_{\rm max}(b)$ for a single $b^2$, their maximum can reach $I^{\rm noise}_{\rm max}(b)$ for only as many $b^2$'s as there are curves.

For instance, for $n=2$ we can insert two $\theta$ values, say $\theta_{0}=\pi/2$ and $\theta_1=\pi/4$. This circuit is depicted in Fig.~\ref{fig:circuit}, and the corresponding incompatibility is shown in Fig.~\ref{fig:qubit} (inset) --- the value $I^{\rm noise}_b(M_0,W^*M_0W)$ is for each $b$ by Eq.~(\ref{circuitincomp}) given as a maximum of the two incompatibility curves.
\begin{figure}
\begin{center}
\includegraphics[scale=0.9]{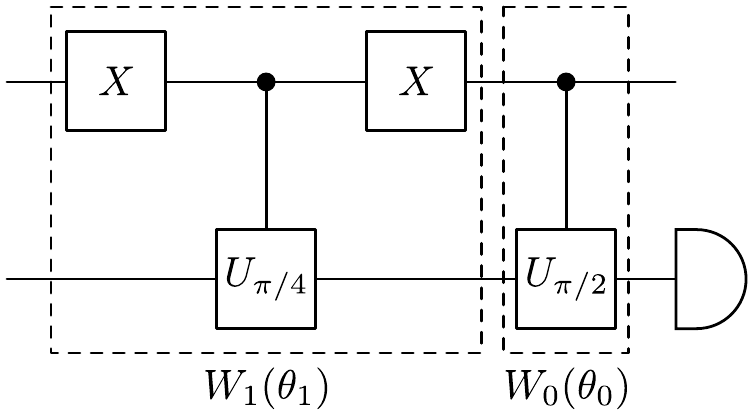}
\end{center}
\caption{\label{fig:circuit} A two-qubit quantum circuit that generates maximal incompatibility for two different monotones $I^{\rm noise}_{b_1}$ and $I^{\rm noise}_{b_2}$ where $\chi_{b_1}=\frac 12$ ($\theta=\pi/2$) and $\chi_{b_2}= \frac 12(1+1/\sqrt 2)$ ($\theta=\pi/4$).}
\end{figure}

Note that it is crucial that Alice's measurement at the end is only performed on the last qubit; this ensures that the projection $M_0$ is $2^{n-1}$ dimensional. 
By comparison, suppose that Alice measures all the qubits at the end, to check if the circuit produces a fixed binary sequence, say $|1,\ldots,1\rangle$. 
Then, regardless of the total circuit unitary $W$, the projections are just one dimensional, and we have a single $\theta$ value given by the fidelity $F=|\langle 1\cdots 1|U|1\cdots 1\rangle|$.

\begin{proposition}\label{limitprop}
With increasing number $n$ of qubits, the above quantum circuits can be used to produce binary measurements $(\M_n,\N_n)$ that are \emph{approximately maximally robust to noise} uniformly for any given bias $b$, i.e.,
$$
\lim_{n\rightarrow\infty} \max_{b\in [-1,1]} (I_{\rm max}^{\rm noise}(b)-I_{b}^{\rm noise}(\M_n,\N_n)) = 0 \, .
$$
\end{proposition}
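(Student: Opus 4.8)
The plan is to reduce the statement, via the key identity \eqref{circuitincomp}, to a one-dimensional approximation problem and then to choose the circuit angles cleverly. Write $g(b,\theta):=I^{\rm noise}_b(P_0,P_\theta)$. With block angles $\{\theta_{\bf i}\}_{{\bf i}\in I_{n-1}}$ the circuit produces $\M_n=M_0$ and $\N_n=W^*M_0W$ with $I^{\rm noise}_b(\M_n,\N_n)=\max_{\bf i}g(b,\theta_{\bf i})$, i.e.\ the \emph{upper envelope} of $2^{n-1}$ single-qubit curves. Since $I^{\rm noise}_{\rm max}(b)=\sup_\theta g(b,\theta)=g(b,\theta_b)$ is attained at the angle $\theta_b$ whose spectral value $\tfrac12(1+\cos\theta_b)=\chi_b$ is given by \eqref{eq:eigenvaluemax}, and since $b^2\mapsto\chi_b$ is a continuous increasing bijection of $[0,1)$ onto $[1/2,1)$, the task is precisely to make this envelope approximate $b\mapsto I^{\rm noise}_{\rm max}(b)$ uniformly. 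First I would note that a net of angles uniform in $\theta$ is the wrong choice; instead the net should consist of the \emph{optimizing} angles for a suitably chosen family of biases.

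Concretely, I would use that $I^{\rm noise}_{\rm max}(b)=1/(2+\sqrt{2(1-b^2)})$ from \eqref{maxincomp} is a strictly increasing continuous bijection of $b^2\in[0,1]$ onto $[1-1/\sqrt2,\,1/2]$. Pick $2^{n-1}$ biases $\beta_{\bf i}$ so that the values $I^{\rm noise}_{\rm max}(\beta_{\bf i})$ are equally spaced in $[1-1/\sqrt2,\,1/2]$, and set each block angle to the corresponding optimizer, $\tfrac12(1+\cos\theta_{\bf i})=\chi_{\beta_{\bf i}}$; this is admissible because $\chi_{\beta_{\bf i}}\in[1/2,1)$ forces $\theta_{\bf i}\in(0,\pi/2]$. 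Then every $\chi_{\beta_{\bf i}}$ is among the spectral values realized by the circuit.

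The estimate then runs as follows. Given $b$, let $\beta_k$ be the chosen bias just below it, with block angle $\theta_k$ (so $\tfrac12(1+\cos\theta_k)=\chi_{\beta_k}$), and let $\beta_{k+1}$ be the next one, so $\beta_k^2\le b^2\le\beta_{k+1}^2$. Since $\theta_k$ is one of the circuit angles,
$$I^{\rm noise}_b(\M_n,\N_n)\ge g(b,\theta_k)\ge g(\beta_k,\theta_k)=I^{\rm noise}_{\rm max}(\beta_k),$$
where the middle step is the monotonicity of $g$ in $|b|$ at fixed $\theta$ (Appendix~\ref{app:monotonicity}, using $|b|\ge|\beta_k|$) and the final equality holds because $\theta_k$ optimizes bias $\beta_k$ (Proposition~\ref{thm:spectrum}). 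Hence $I^{\rm noise}_{\rm max}(b)-I^{\rm noise}_b(\M_n,\N_n)\le I^{\rm noise}_{\rm max}(b)-I^{\rm noise}_{\rm max}(\beta_k)$, which by monotonicity of $I^{\rm noise}_{\rm max}$ is at most the oscillation of $I^{\rm noise}_{\rm max}$ over one grid cell, i.e.\ $(1/\sqrt2-1/2)/2^{n-1}$; the top cell $b^2\in[\beta_{\rm top}^2,1]$ is handled identically using $I^{\rm noise}_{\rm max}(1)=1/2$. Taking the supremum over $b$ and letting $n\to\infty$ yields the claim, with an explicit exponential rate.

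The point I expect to be the main obstacle is the corner $|b|\to1$, $\theta\to0$. There $g$ is genuinely discontinuous — it decays to $0$ as $\theta\to0$ for $|b|<1$ but tends to $1/2$ for $|b|=1$, the jump visible in Fig.~\ref{fig:theta} — while the maximizer satisfies $1-\chi_b\approx\sqrt{2(1-b^2)}\to0$, so the peak of $g(b,\cdot)$ concentrates arbitrarily close to the discontinuity. A direct uniform-continuity argument on the closed parameter square therefore fails near $(|b|,\theta)=(1,0)$. The trick that dissolves this is to bound the gap \emph{one-sidedly}, invoking only monotonicity of $g$ toward the larger bias; this lower bound on $g(b,\theta_k)$ never requires continuity at the corner, and the whole estimate collapses to the elementary, explicit oscillation of $I^{\rm noise}_{\rm max}$. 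If one preferred not to lean on the full monotonicity result, the same end could be reached by splitting $[-1,1]$ into $b^2\le1-\eta_n$, where $g$ is uniformly continuous on the relevant compact set so a fine net suffices, and $b^2>1-\eta_n$, treated by the monotonicity bound, with $\eta_n\to0$ tuned against the net resolution.
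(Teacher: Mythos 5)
Your proof is correct, and it is worth noting that the paper itself never writes out a proof of Proposition~\ref{limitprop}: the claim is left implicit in the circuit construction, Eq.~\eqref{circuitincomp}, and the caption of Fig.~\ref{fig:qubit} (``when the number of points $\theta$ in the spectrum increases, the supremum approaches the curve $I_{\rm max}(b)$''), with each single-qubit curve touching $I^{\rm noise}_{\rm max}$ at exactly one $b^2$. Your argument is the natural formalization of that intended envelope picture, but it adds substance in three places where the paper is silent. First, you choose the grid equispaced in the \emph{values} $I^{\rm noise}_{\rm max}(\beta_{\bf i})\in[1-1/\sqrt2,\,1/2]$ rather than in $\theta$ or $b$, which makes the cell oscillation of $I^{\rm noise}_{\rm max}$ exactly the grid spacing and yields an explicit rate $\sim(1/\sqrt2-1/2)\,2^{-(n-1)}$. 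Second, and most importantly, your one-sided chain $I^{\rm noise}_b(\M_n,\N_n)\ge g(b,\theta_k)\ge g(\beta_k,\theta_k)=I^{\rm noise}_{\rm max}(\beta_k)$ correctly combines the monotonicity in $|b|$ proved in Appendix~\ref{app:monotonicity} (established there for a different purpose) with the attainment statement behind \eqref{maxincomp} and \eqref{eq:eigenvaluemax}; this genuinely dissolves the only delicate point, namely that $g(b,\theta)$ is discontinuous at the corner $(|b|,\theta)=(1,0)$ while the maximizers $\theta_b\to0$ concentrate there as $|b|\to1$, so a naive uniform-continuity argument on the parameter square would fail, exactly as you diagnose. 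Third, you correctly keep all grid angles in $(0,\pi/2]$ by excluding $\beta^2=1$ (where $\chi_b=1$ would force the inadmissible $\theta=0$) and handle the top cell separately via $I^{\rm noise}_{\rm max}(\pm1)=1/2$; this matches the paper's observation that the supremum at $b=\pm1$ is approached but never attained in finite dimension, which is precisely why Proposition~\ref{maxrob} requires infinite dimension. The only cosmetic quibble is the off-by-one in the cell width ($2^{n-1}$ points give $2^{n-1}-1$ cells plus the top cell), which does not affect the limit.
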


\subsection{Maximal incompatibility of position and momentum}

It is clear from Proposition~\ref{limitprop} that truly maximally incompatible projections $M$ and $N$ can only exist in infinite dimensional systems, where the spectrum of $MNM$ fills the interval $[0,\frac 12]$. 
It is then natural to ask if such projections also have a physical meaning. 
Interestingly, this turns out to be the case: certain binarizations of the canonical variables $Q$ and $P$ for a one-mode continuous variable system have this property!

In order to see this, we split the real line $\real$ into positive and negative half lines.
This corresponds to asking if the result of $Q$ measurement is positive, and similarly for $P$-measurement.
Given that the wave function of the system is $\psi\in\ltwo{\real}$, the probabilities for the measurement outcomes are
\begin{align*}
\ip{\psi}{Q^+ \psi} &= \int_0^{\infty} \mo{\psi(x)}^2 dx,\\
\ip{\psi}{P^+\psi} &= \int_0^{\infty} \mo{\widehat{\psi}(p)}^2 dp,
\end{align*}
where $Q^+$ and $P^+$ denote the associated projections, and $\hat \psi$ is the Fourier transform. Using the fact that both projections are invariant under dilations, one can diagonalize them explicitly up to two-by-two matrices, as shown in \cite{KiWe10}. From the resulting decomposition it is then apparent that the spectrum of $Q^+P^+Q^+$ is the whole interval $[0,1]$. Hence we indeed have the following result.
\begin{proposition}\label{maxrob}
The half-line binarizations of position and momentum are maximally robust to noise, meaning that $I_{b}^{\rm noise}(Q^+,P^+)=I_{\rm max}^{\rm noise}(b)$ for all biases $b\in [-1,1]$.
Their incompatibility is more robust than any finite dimensional pair of binary measurements. 
\end{proposition}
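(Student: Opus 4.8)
The plan is to reduce the asserted identity $I_b^{\rm noise}(Q^+,P^+)=I_{\rm max}^{\rm noise}(b)$ to a single spectral statement about $Q^+P^+Q^+$ and then to prove that statement by exploiting the dilation invariance of the two half-line projections. First I would observe that $Q^+$ and $P^+$ are genuine projections, so Proposition~\ref{thm:spectrum} applies directly: for a bias $b$ with $|b|<1$ the saturation \eqref{eq:jb=tb} holds exactly when the point $\chi_b$ of \eqref{eq:eigenvaluemax} lies in the spectrum of $\id-(Q^+-P^+)^2$. Since that spectrum agrees, away from the excluded values $0$ and $1$, with $\spec{Q^+P^+Q^+}$ (the remark following Proposition~\ref{incomp_prop}), and since \eqref{maxincomp} shows $\chi_b$ to be a continuous function of $|b|$ with $\chi_0=\tfrac12$ and $\chi_b\to 1$ as $|b|\to 1$, so that $\{\chi_b:|b|<1\}=[\tfrac12,1)\subset(0,1)$, it suffices to prove the single fact $\spec{Q^+P^+Q^+}=[0,1]$. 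This then secures the equality simultaneously for every $b\in(-1,1)$.

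The hard part is establishing $\spec{Q^+P^+Q^+}=[0,1]$. Here I would use that both projections commute with the unitary dilation group $(D_s\psi)(x)=e^{s/2}\psi(e^sx)$: the projection $Q^+$ because dilations preserve the sign of $x$, and $P^+$ because conjugation by the Fourier transform sends $D_s$ to $D_{-s}$, which again preserves the sign of the momentum. Splitting $\ltwo{\real}=\ltwo{0,\infty}\oplus\ltwo{-\infty,0}$ and applying the Mellin transform on each half-line diagonalizes the dilation generator, turning $\ltwo{\real}$ into a direct integral $\int^\oplus\complex^2\,d\tau$ in which $Q^+$ becomes the constant block $\mathrm{diag}(1,0)$ and $P^+$ becomes a rank-one projection $\Pi(\tau)$, exactly as carried out in \cite{KiWe10}. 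In each block $Q^+P^+Q^+$ reduces to the scalar $p(\tau):=\langle e_+|\Pi(\tau) e_+\rangle$, so that $\spec{Q^+P^+Q^+}=\overline{\{p(\tau):\tau\in\real\}}$. The main obstacle is the explicit evaluation of $p(\tau)$: the diagonalization of \cite{KiWe10} yields a Gamma-function overlap whose modulus is a strictly monotone sigmoid of $\tau$ with limits $0$ and $1$ as $\tau\to\pm\infty$. Hence the range of $p$ is the open interval $(0,1)$, and passing to the closure gives $\spec{Q^+P^+Q^+}=[0,1]$, which in particular contains every $\chi_b$ with $|b|<1$.

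It then remains to handle the boundary biases and the robustness comparison. For $b=\pm1$ one has $\chi_b=1$, which is excluded from the criterion of Proposition~\ref{thm:spectrum}, so I would instead invoke the closed form \eqref{Iminus}: since $[Q^+,P^+]\neq 0$ and $\no{Q^+P^+Q^+}=1$ (the supremum of the spectrum just found), it gives $I_{\pm1}^{\rm noise}(Q^+,P^+)=1-(1+1)^{-1}=\tfrac12=I_{\rm max}^{\rm noise}(\pm1)$, closing the endpoint. Finally, the comparison follows from Proposition~\ref{thm:spectrum} together with the rank count recorded before Proposition~\ref{limitprop}: any finite-dimensional pair of projections $(M,N)$ has finite $\spec{MNM}$ and can therefore saturate $I_b^{\rm noise}(M,N)=I_{\rm max}^{\rm noise}(b)$ for at most finitely many values of $b^2$, whereas $(Q^+,P^+)$ saturates it for all $b$; thus for all but finitely many biases the half-line pair is strictly more robust.
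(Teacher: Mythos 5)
Your proposal is correct and follows essentially the same route as the paper: dilation invariance of $Q^+$ and $P^+$, the two-by-two block diagonalization from \cite{KiWe10} showing that the spectrum of $Q^+P^+Q^+$ is all of $[0,1]$, and then Proposition~\ref{thm:spectrum} to conclude saturation for every bias. You merely flesh out details the paper leaves implicit (the Mellin-transform fibers, the endpoint $b=\pm 1$ via Eq.~\eqref{Iminus}, and the finite-spectrum counting for the comparison with finite-dimensional pairs), all of which are consistent with the paper's sketch.
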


We note that not all binarizations of position and momentum are maximally robust to noise.
In particular, a suitable periodic division of the real line $\real$ can make the binarizations even commutative \cite{KiWe10}, hence compatible already for $\lambda=0$.

\section{A quantum incompatibility game}
\label{sec:game}

The usefulness of quantum resources is sometimes analyzed via a game between two opponents,  quantum physicist (QP) and local realist (LR); see, e.g., \cite{DaGiGr05}. 
Here we provide a simple example of such a game, in which incompatibility is the quantum resource, and the quantity $I_b^{\rm noise}$ is the relevant figure of merit. 
Since the operational context is clearest in the CHSH experiment already considered above, we restrict to that scenario.

The challenge of the game is that QP has to design an experimental situation leading to a measurement outcome distribution for which the correlations between Alice and Bob are nonclassical in the sense that Bell inequality is violated. The experiment must be local in the sense that classical communication between the two parties is forbidden. 
Relying on quantum physics, the two resources that QP necessarily needs in order to win the game are (a) a source of entangled states and (b) a collection of local incompatible measurements for both parties. 
If the states and measurements are appropriately chosen then QP can violate the Bell inequality, thereby winning the challenge.

If we assume that resource (b) is unrestrictedly available, the relevant figures of merit are those quantifying resource (a). 
According to the general idea described earlier, we wish to investigate the opposite, assuming that resource (a) is not an issue, while resource (b) is restricted. 
We look at the situation from Alice's point of view, assuming that Bob has unrestricted resources.

We can think of LR as the ``evil'' Eve component in the scheme, disturbing Alice, effectively causing some noise in her measurements. The task for QP is then \emph{to choose a pair of incompatible quantum measurements that is most robust to noise}, so that Bell inequality is violated despite Eve's interference. It follows from the above development that the quantity $I_b^{\rm noise}(\M,\N)$ tells the amount of $b$-biased noise that LR needs to add so as to destroy any Bell violations, assuming Alice's measurements are $\M$ and $\N$.

There are now different scenarios depending on how much control on the noise LR is assumed to have. 
Each of these illustrates different aspects of the earlier theory. 
As before, the noise parameters are $(\lambda, b)$. 
We let $\lambda_{\rm LR}$ denote the maximal amount of noise LR can add. 
(In a real scenario, this could be related to, e.g., the duration of the measurement.)

\paragraph{LR-controlled bias.}
Assuming that LR has control on the bias $b$ of the noise, her optimal strategy is clearly to choose $b$ that minimizes $I_b^{\rm noise}(\M,\N)$, for a given choice $\M,\N$ of QP. This in turn implies that the relevant figure of merit for QP is the corresponding amount of noise $\inf_b I_b^{\rm noise}(\M,\N)=1-j(\M,\N)$. 
This means that QP must choose $\M,\N$ for which $j(\M,\N)$ is minimal. 
In our case of binary measurements, we simply have $j=1-I_{b=0}^{\rm noise}$, that is, the minimum point $b=0$ is independent of $(\M,\N)$. Hence, \emph{the optimal strategy for $LR$ does not depend on the choice of QP}. Assuming QP is restricted to projective measurements, we get from \eqref{Inolla}, an explicit expression
\[
j(M,N)=(1+2\no{[M,N]})^{-\frac 12} \, .
\]
Hence, QP should choose $M,N$ such that $\no{[M,N]}=1/2$, so that the Tsirelson's bound is achieved. 
Thus, \emph{assuming optimal strategy for LR, the optimal strategy for QP is fixed.} Then LR wins exactly when $\lambda_{\rm LR}\geq 1-1/\sqrt 2$. It is important to note that the optimal strategy for QP can already be realized by qubit measurements.

\paragraph{Fixed bias known to QP.}
Let us now assume that LR has no control over the bias parameter $b$, which is held fixed (e.g., by the construction of the measurement devices). While the strategy of LR is trivial in this scenario, it turns out that from the point of view of QP, the challenge becomes more interesting. Assuming that QP knows the bias $b$, he should choose a pair $\M,\N$ with the spectrum of $\id-(N-M)^2$ containing the point $\chi_b$ of \eqref{eq:eigenvaluemax}, so that the amount of noise LR has to add in order to destroy Bell violations is maximal, $I_b^{\rm noise}(\M,\N)=I^{\rm noise}_{\rm max}(b)$, for this particular bias $b$ (see Fig.~\ref{fig:theta}). Note that this strategy can be realised with qubit measurements. Hence, LR wins exactly when $\lambda_{\rm LR}> I^{\rm noise}_{\rm max}(b)$.

\paragraph{QP-controlled bias.}
We may also consider the case where QP can control the bias, by e.g.~selecting a measurement device with known systematic error. Now QP should choose the bias and the measurements $\M,\N$ such that $I_b^{\rm noise}$ is as large as possible. 
From Fig.~\ref{fig:qubit} it is clear that the optimal choice is not the unbiased case where noise robustness is restricted by the Tsirelson's bound. 
In fact,  \emph{destroying incompatibility is more difficult with strongly unbiased noise}. Hence, QP should choose the \emph{maximally unbiased case} $b=\pm 1$, and measurements close to being commutative. Then LR needs $\lambda_{\rm LR}\geq 1/2$ to win. As mentioned above, this amount of noise is enough to destroy incompatibility of \emph{any} pair of POVMs with arbitrary number of outcomes.

\paragraph{Fixed but unknown bias.} Here the bias is assumed to be fixed, but unknown to QP. Hence QP may assume it to be drawn randomly from the uniform distribution \footnote{We could also draw $b^2$ instead of $b$. While this would change the resulting probabilities, it does not affect the optimal strategy.}. 
Now the \emph{optimal strategy for QP is given by the value of $\theta$ which minimizes the probability that LR wins.} Assuming that QP knows $\lambda_{\rm LR}$, he can determine this probability: 
\begin{equation*}
P_{\rm LR, win}(\lambda_{\rm LR},\theta)={\rm Prob}(I_b^{\rm noise}(\theta)\leq \lambda_{\rm LR})=b_{\lambda_{\rm LR}}(\theta) \, , 
\end{equation*}
 where $b_\lambda(\theta)$ is determined by $\lambda=I^{\rm noise}_{b_\lambda(\theta)}(\P_0,\P_\theta)$.
It is clear from Fig.~\ref{fig:qubit} that the probability is minimized by choosing the value of $\theta$ for which $\lambda=I^{\rm noise}_{\rm max}(b_\lambda(\theta))$.
Hence, the optimal strategy for QP is to choose qubit measurements with $\cos\theta=\frac 12(1-\lambda_{\rm LR})^{-2}-1$ if $\lambda_{\rm LR}\geq 1-1/\sqrt 2$, and $\cos\theta=0$ (i.e., CHSH-optimal incompatibility) otherwise. In particular, if LR can cause more noise than required to destroy CHSH correlations, the optimal strategy for LR does \emph{not} involve CHSH-optimized measurements. 
With QP's optimal choice, she wins with probability 
\begin{align*}
P_{\rm QP, win}(\lambda_{\rm LR}) & =1-(I^{\rm noise}_{\rm max})^{-1}(\lambda_{\rm LR}) \\
&=1-\sqrt{\frac 12 \lambda_{\rm LR}^{-2}-(\lambda_{\rm LR}^{-1}-1)^2} \, , 
\end{align*}
 if $\lambda_{\rm LR}\geq 1-1/\sqrt 2$, and wins with certainty otherwise. Note that $P_{\rm QP, win}(1-1/\sqrt 2)=1$, and $P_{\rm QP, win}(1/2)=0$, as expected. The game is fair, i.e., $P_{\rm QP, win}(\lambda_{\rm LR})=\frac 12$, if and only if $\lambda_{\rm LR}=(2+\sqrt{3/2})^{-1}\approx 0.310$, which is only slightly larger than the minimal value $1-1/\sqrt{2}\approx 0.293$.

\paragraph{Fixed but unknown bias and magnitude.}
Here we also take $0<\lambda_{\rm LR}<\tfrac 12$ to be randomly chosen with uniform distribution. This case is interesting because the dimension of the available Hilbert space becomes relevant. Suppose first that only qubit resources are available. 
Then for measurements with angle $\theta$, the probability that QP wins is simply the probability that the randomly chosen point $(b,\lambda_{\rm LR})$ is under the curve $(b,I_{b}^{\rm noise}(\P_0,\P_\theta))$, $b\in [-1,1]$; see Fig.~\ref{fig:qubit}. 
Hence 
\begin{equation*}
P_{\rm QP, win}(\theta)=\int_{-1}^1 db \,I_{b}^{\rm noise}(\P_0,\P_\theta) \, , 
\end{equation*}
 and the optimal strategy can be computed by optimizing this function. 
 Now if we increase the available resources to include higher-dimensional measurements, QP's winning probability grows as more $\theta$ values can be included. 
The maximum possible probability is
$$
P_{\rm QP, win}=\int_{-1}^1 db \,I^{\rm noise}_{\rm max}(b)=\frac \pi2 (\sqrt 2-1)\approx 0.65 \, . 
$$
While the value itself is not of particular significance, the important point is the following:
 \emph{the optimal strategy requires maximal noise robustness in the sense of Proposition~\ref{maxrob}}. In particular, the Hilbert space must be infinite dimensional, and QP must choose a pair of projective measurements $(\M,\N)$ such that $\id-(N-M)^2$ has full spectrum, e.g., the binarizations of position and momentum.
 
\section{Summary and outlook} 

We have emphasized the role of incompatible measurements as a quantum resource necessary for, e.g., creating nonclassical correlations for the CHSH Bell scenario, by introducing the general notion of the \emph{incompatibility monotone}, and constructing a family of explicit instances for the case involving only binary measurements.

Similarly to quantum state resources, quantified by, e.g., entanglement monotones, there is no unique measure of incompatibility. Our choice $I_{\bf a}$, however, is motivated by several desirable properties: (a) $I_{\bf a}$ decreases under local operations, emphasizing its \emph{``dual'' nature to entanglement monotones}, and capturing the decay of incompatibility under noisy quantum evolution; (b) it \emph{operationally captures the local quantum resource} needed to violate CHSH Bell inequalities; (c) the special case $I_b^{\rm noise}$ has a \emph{direct operational meaning} as the magnitude of $b$-biased local noise needed to add to Alice's measurements so as to destroy all nonclassical CHSH correlations; and (d) $I_{\bf a}$ is \emph{computable via semidefinite program}, hence efficient for numerical investigation.

We have presented a detailed analysis of the properties of the noise-based monotone $I_b^{\rm noise}$, and provided an exemplary quantum circuit that could be used to implement maximally incompatible measurements in an experimental setting where only certain elementary gates are available. We have further illustrated the use of the quantity $I_b^{\rm noise}$, and its relationship to joint measurability degree, in the form of a quantum game, where one player aims to preserve the resource, and another one tries to destroy it via noise addition.

A couple of remarks concerning the specific nature of our setting are in order. First of all, the definition of incompatibility, and likewise the noise-robustness interpretation of our incompatibility monotones, make no reference to a possible tensor product structure of the underlying Hilbert space. In particular, the incompatibility resource does not have to be local; one can also investigate global incompatible measurements in a multipartite system: this would be relevant for, e.g., contextuality arguments as discussed in \cite{LiSpWi11}. However, in the above quantum game, as well as in the interpretation of the SDP monotones via the CHSH setting, incompatibility appears manifestly as a local resource, in contrast to the state resource, which is nonlocal. As a second remark, we emphasize that even though we restrict to the binary case, the idea of the incompatibility monotone is more general, and the present paper should be regarded as the first step towards a more complete picture. Our analysis shows that the problem is already nontrivial in the binary setting; in fact, the connection to Tsirelson's bound shows that it is as nontrivial as the structure of the CHSH inequality violations in the first place.

Further research in these directions will be generally aiming at clarifying the role of incompatibility at the ``measurement side'' of the quantum resource theory, dual to the ``state side,'' where massive efforts have been made to investigate entanglement and other forms of quantum correlations. In particular, it will be interesting to study specific quantum information protocols, where incompatibility monotones could serve as a useful figure of merit. For instance, one can investigate local aspects of decoherence in quantum control schemes, e.g.~involving specific sets of unitary operations used to create the measurements, and including environment-induced noise that gradually destroys the resource. Moreover, the connection to EPR steering requires further investigation.

\section*{Acknowledgments}
We acknowledge support from the Academy of Finland (Grant No.~138135), European CHIST-ERA/BMBF Project No.~CQC, EPSRC Project No.~EP/J009776/1, Slovak Research and Development Agency grant APVV-0808-12 QIMABOS, VEGA Grant No.~QWIN 2/0151/15 and Program No.~SASPRO QWIN 0055/01/01. The authors thank R. Uola, and one of the anonymous referees for useful comments on the manuscript.

\appendix

\section{Proof of Proposition~\ref{incomp_prop}}
\label{app:proof}

There exists $G_{ij}$ fulfilling \eqref{equalityconst} and \eqref{semidefmu} for $\mu=0$ if and only if $M$ and $N$ are compatible. Moreover, $I_{\bf a}(M,N)=I_{\bf a}(N,M)$ because the interchange of $N$ and $M$ corresponds to the interchange $G_{ij}\mapsto G_{ji}$ in the equality constraints, and this leaves the semidefinite constraints \eqref{semidefmu} unchanged since $(a_{ij})$ is symmetric. In order to show monotonicity, we let $\Lambda$ be a unital completely positive map, and suppose that $\mu\geq 0$ is admissible for the pair $(N,M)$, with $(G_{ij})$ the associated operators satisfying the constraints. Then by linearity, positivity and unitality, $(\Lambda(G_{ij}))$ satisfies the same constraints with $M$ and $N$ replaced by $\Lambda(M)$ and $\Lambda(N)$, respectively, i.e., $\mu$ is also admissible for $(\Lambda(N), \Lambda(M))$. This implies that $I_{\bf a}(N,M)\geq I_{\bf a}(\Lambda(N),\Lambda(M))$. Hence $I_{\bf a}$ is an incompatibility monotone.

In order to prove (a), we let $N,M_n\in\mathcal{E}$, take $\lambda_n\geq 0$ with $\sum_n\lambda_n=1$, and put $M:=\sum_n\lambda_n M_n$. Suppose that $\mu$ is admissible for each pair $(M_n,N)$, and let $(G^{(n)}_{ij})$ be corresponding operators satisfying the constraints. Then $\mu$ is admissible for $(N,M)$ because the operators $G_{ij}=\sum_n\lambda_n G^{(n)}_{ij}$ satisfy the constraints. Since the set of admissible values is always a half line, this implies (a). 

In order to prove (b), we decompose $\mathcal H$ now into an appropriate direct sum, and let $M=\oplus_n M_n$, $N=\oplus_n N_n$ (meaning that, e.g., $M_n$ is supported inside the subspace $\mathcal H_n$). Assuming first that $\mu$ is admissible for $(M_n,N_n)$ for all $n$, with $G^{(n)}_{ij}$ the corresponding operators on $\mathcal H_n$, we see $\mu$ is also admissible for $(M,N)$, with $G_{ij}:=\oplus_n G_{ij}^{(n)}$. Conversely, suppose that $\mu$ is admissible for $(M,N)$, with operators $G_{ij}$. Then $\mu$ is admissible for $(M_n,N_n)$ with the operators $G_{ij}^{(n)}:=P_nG_{ij}P_n$ because $N_n=P_nNP_n$ and $M_n=P_nMP_n$, and $P_n$ is the identity operator on the subspace $\mathcal H_n$. Hence (b) holds.

Part (c) follows from (b) and unitary invariance. Part (d) follows directly from the definition.

Concerning (e), each pair of projections can be diagonalized simultaneously up to two-by-two blocks if the Hilbert space is finite dimensional \cite{Ha69, RaSi89}; in a suitable basis the Hilbert space decomposes into a direct sum
\begin{align*}
\mathcal H &=\mathcal{M}_c\oplus \left(\oplus_n \mathbb C^2\right),\\
M &= M_0\oplus\left(\oplus_n P_0\right), & N&=N_0\oplus\left(\oplus_n P_{\theta_n}\right),
\end{align*}
where $\tfrac 12(1+\cos\theta_n)$ are the eigenvalues different from $0$ and $1$ of the operator $\id-(M-N)^2$, appearing according to their multiplicities, and $[N_0,M_0]=0$. In the infinite-dimensional case the spectrum may also be continuous, the direct sum is replaced by a direct integral, and the statement holds with ``set of eigenvalues'' replaced by ``spectrum.'' Using (b) and (c) we thus get (e).


\section{SDP duality and CHSH expression}
\label{app:sdp}

The dual to the semidefinite program (\ref{program}) can be computed in a straightforward fashion; the result is
\begin{equation}\label{dual2}
I_{\bf a}(M,N)=\sup_{X,Y,Z} {\rm tr}[(N-\id)Y]-{\rm tr}[MX]+{\rm tr}[(M-N)Z],
\end{equation}
with constraints
\begin{align*}
&X\geq 0,\, Y\geq 0, \, Z\geq 0\\
&Z\leq X+Y,\\
&\tr{a_{00}Y+a_{01}Z+a_{10}(X+Y-Z)+a_{11}X}=1.
\end{align*}
Since $R:=X+Y$ is a positive operator, it can be inverted inside its support projection. The conditions $Y\leq R$ and $R-Z\leq R$ are then equivalent to $Q:=R^{-\tfrac 12}YR^{-\tfrac 12}\leq\id$ and $P:=R^{-\tfrac 12}(R-Z)R^{-\tfrac 12}\leq\id$. Following \cite{WoPeFe09}, we then define $\Omega:=\sum_i |ii\rangle\in \mathcal H \otimes \mathcal H$ in a fixed basis, and let $A\mapsto A^{\rm T}$ denote the transpose in this basis. Then for any matrices $A,B$ we have
$\langle \Omega |(A\otimes B)\Omega\rangle ={\rm tr}[A B^{\rm T}]$.
Putting $\psi:=(\id \otimes (\sqrt{R})^{\rm T})\Omega$ we have
\begin{align*}
{\rm tr}[MX]&=\langle \psi| M\otimes(\id-Q^{\rm T})\psi\rangle,\\
{\rm tr}[(N-\id)Y]&=\langle \psi| (N-\id)\otimes Q^{\rm T}\psi\rangle,\\
{\rm tr}[(M-N)Z]&=\langle \psi| (M-N)\otimes(\id-P^{\rm T})\psi\rangle.\\
\end{align*}
and setting
\begin{align*}
A_1&:=\id-2N,  & A_2 &:=2M-\id,\\
B_1&:=\id-2P^{\rm T},  & B_2 &:=\id -2Q^{\rm T}
\end{align*}
after simple algebraic manipulations we get \eqref{dual4}.

\section{Qubit incompatibility}
\label{app:qubit}

We use the known characterization of compatibility of binary qubit POVMs \cite{Busch} (see also \cite{TeikoDaniel} and \cite{YLLO}): For two effects $E=\frac 12(\alpha \id+{\bf m\cdot \sigma})$ and $F=\frac 12(\beta \id+{\bf n\cdot \sigma})$ define $\langle E|F\rangle:=\alpha\beta-{\bf m\cdot n}$. Denote, e.g., $E^\perp :=\id-E$. Then $E$ and $F$ are compatible if and only if
\begin{align}\label{coexfunc}
& \left[ \langle E|E\rangle\langle E^\perp|E^\perp\rangle\langle F|F\rangle\langle F^\perp|F^\perp\rangle \right]^{1/2}-\langle E|E^\perp\rangle\langle F|F^\perp\rangle\nonumber\\
&+\langle E|F^\perp\rangle\langle E^\perp|F\rangle+\langle E|F\rangle\langle E^\perp|F^\perp\rangle \geq 0.
\end{align}
As the expression shows, deciding incompatibility is already considerably nontrivial in the qubit case. We look at effects $E=M_{\lambda,b}$ and $F=N_{\lambda,b}$ defined by \eqref{eq:noisym} via the two projections $M=P_0$ and $N=P_\theta$. After some algebraic manipulation, \eqref{coexfunc} reduces to
$$
[(1-\lambda)^2\cos\theta-\lambda^2b^2]^2-2(1-\lambda)^2+1-2\lambda^2b^2 \geq 0.
$$
The left hand side changes sign at the unique point $\lambda=I_{b}^{\rm noise}(P_0,P_\theta)$, which therefore solves the resulting equation \eqref{comp_eq}.

\section{Monotonicity of $b\mapsto I_b^{\rm noise}$}
\label{app:monotonicity}

Given two projective measurements $\M$ and $\N$ we will now show that $I_b^{\rm noise}(\M,\N)$ is a non-decreasing function of $|b|$. Since $I_b^{\rm noise}(\M,\N)$ is given as a supremum of the values $I_b^{\rm noise}(\P_0,\P_\theta)$ where $\theta$ varies over a subset of $[0,\pi]$, it suffices to show that $I_b^{\rm noise}(\P_0,\P_\theta)$ is non-decreasing in $|b|$ for each $\theta$. This is apparent in Fig.~\ref{fig:theta}, and can be shown analytically using \eqref{comp_eq} as follows: We fix $0\leq \theta\leq \pi$ in the following. Defining
\[
f_\theta(\lambda,b)=(1-\lambda)^2\cos\theta-\lambda^2b^2
\]
for $0\leq \lambda \leq 1/2$ and $-1\leq b\leq 1$, we can rewrite Eq.~(\ref{comp_eq}) as
\begin{equation}
\label{comp_eq_f}
f_\theta^2+2f_\pi+1=0.
\end{equation}
For shortness we do not explicitly write the dependence on $\lambda$ and $b$ from now on in $f_\theta$ and its derivatives. 
This implicitly determines $\lambda=\lambda(b)$ as a function of $b$, and we can find $\lambda':=d\lambda/db$ by differentiating:
\[
f_\theta \frac{df_\theta}{db}+\frac{df_\pi}{db}=0,
\]
where
\[
\frac{df_\theta}{db}=-\frac{2\lambda'}{1-\lambda}\left(f_\theta+\lambda b^2\right)-2b\lambda^2.
\]
Solving for $\lambda'$ we find
\[
\lambda'=-\frac{\lambda^2(1-\lambda)b(f_\theta+1)}{f_\theta^2+f_\theta\lambda b^2+f_\pi+\lambda b^2}.
\]
For the numerator we have $\lambda^2(1-\lambda)\geq 0$ and
\[
f_\theta(\lambda,b)+1\geq f_\pi(\lambda,b)+1=2\lambda-\lambda^2(1-b^2)\geq0
\]
for all relevant $\lambda$ and $b$. Using Eq.~(\ref{comp_eq_f}) to get rid of the second power of $f_\theta$, we find that the denominator equals $\lambda D$ where
\begin{align*}
D&=b^2[(1-\lambda)^2\cos\theta-\lambda^2 b^2+1]+\lambda(1+b^2)-2\\
&\leq(1-b)^2(\lambda^2+\lambda-2)\leq 0
\end{align*}
with equality only for the boundary case $b^2=1$ and $\theta=0$. Hence, we find that
\[
\mathrm{sgn}(\lambda')=\mathrm{sgn}(b)
\]
with $\lambda'=0$ only when $\cos^2\theta=1$ (a boundary case) or for $b=0$.
Hence, $\lambda(b)=I_b^{\rm noise}(\P_0,\P_\theta)$ is nondecreasing in $|b|$.


\begin{thebibliography}{10} 

\bibitem{AlPiBaToAc07} M. L. Almeida, S. Pironio, J. Barrett, G. T\'oth, and A. Ac\'in, {\em Noise Robustness of the Nonlocality of Entangled Quantum States}, {\em Phys. Rev. Lett.} \textbf{99,} 040403 (2007).

\bibitem{BrGo15} F. G.S.L. Brandao and G. Gour, The general structure of quantum resource theories, arXiv:1502.03149 [quant-ph] (2015).

\bibitem{BuHeScSt13} P. Busch, T. Heinosaari, J. Schultz, and N. Stevens, Comparing the degrees of incompatibility inherent in probabilistic physical theories, {\em Europhys. Lett.} \textbf{103} 10002 (2013).

\bibitem{BuLaMi96} P. Busch, P.J. Lahti, and P. Mittelstaedt, {\em The Quantum Theory of Measurement}, 2nd revised ed. (Springer-Verlag, Berlin, 1996).

\bibitem{Busch}
P.~Busch and H.-J.~Schmidt, Coexistence of qubit effects, {\em Quantum Inf.~Process.} \textbf{9} 143 (2010).

\bibitem{CoFrSp14} B. Coecke, T. Fritz, and R.W. Spekkens, A mathematical theory of resources, arXiv:1409.5531 [quant-ph].

\bibitem{DaGiGr05} W. van Dam, R.D. Gill, and P.D. Grunwald, The statistical strength of nonlocality proofs, {\em IEEE Trans. Inf. Theory} \textbf{51} 2812 - 2835  (2005).

\bibitem{DoPaSp03} A.C. Doherty, P. A. Parrilo, and F. M. Spedalieri, Complete family of separability criteria, {\em Phys. Rev. A} \textbf{69} 022308 (2004).

\bibitem{GrPoWi05} B. Groisman, S. Popescu, and A. Winter, Quantum, classical, and total amount of correlations in a quantum state, {\em Phys. Rev. A} \textbf{72} 032317 (2005).

\bibitem{erkka} E. Haapasalo, Robustness of incompatibility for quantum devices, J. Phys. A: Math. Theor. 48 255303 (2015).

\bibitem{Ha69} P. R. Halmos, Two subspaces, {\em Trans. Am. Math. Soc.} \textbf{144} 381 (1969). 

\bibitem{us}
T.~Heinosaari, J.~Kiukas, and D.~Reitzner, Coexistence of effects from an algebra of two projections {\em J.~Phys A} \textbf{47} 225301 (2014).

\bibitem{HeScToZi14} T. Heinosaari, J. Schultz, A. Toigo, and M. Ziman, Maximally incompatible quantum observables, {\em Phys.~Lett.}~\textbf{A378} 1695 (2014).


\bibitem{Horo09} R. Horodecki, R., P. Horodecki, M. Horodecki, and K. Horodecki, Quantum entanglement, {\em Rev. Mod. Phys.} \textbf{81} 865 (2009).

\bibitem{JaMi05} M. A. Jafarizadeh, M. Mirzaee, and M. Rezaee, Exact calculation of robustness of entanglement via convex semidefinite programming, {\em Int. J. Quantum Inf.} \textbf{03} 511 (2005).

\bibitem{KiWe10} J. Kiukas and R.F. Werner, Maximal violation of Bell inequalities by position measurements, {\em J. Math. Phys.} \textbf{51} 072105 (2010).

\bibitem{LaPu97} P. Lahti and S. Pulmannova, Coexistent observables and effects in quantum mechanics, {\em Rep. Math. Phys.} \textbf{39,} 339 (1997).

\bibitem{LiSpWi11} Y.-C. Liang, R.W. Spekkens, and H.M. Wiseman, Specker's parable of the overprotective seer: A road to contextuality, nonlocality and complementarity , {\em Phys. Rep. } \textbf{506} (2011) 1--39.

\bibitem{Lu85} G. Ludwig, {\em An Axiomatic Basis for Quantum Mechanics} (Springer-Verlag, Berlin, 1985), Vol I.

\bibitem{pusey} M.F. Pusey, Verifying the quantumness of a channel with an untrusted device, J. Opt. Soc. Am. B \textbf{32} A56 (2015).

\bibitem{QuVeBr14} M.T. Quintino, T. V\'ertesi, and N. Brunner, Joint Measurability, Einstein-Podolsky-Rosen Steering, and Bell Nonlocality, {\em Phys. Rev. Lett.} \textbf{113} 160402 (2014).

\bibitem{RaSi89} I. Raeburn and A. M. Sinclair, The $C^*$-algebra generated by two projections, {\em Math. Scand.} \textbf{65} 278 (1989).

\bibitem{Sk14} P. Skrzypczyk, M. Navascu\'es, and D. Cavalcanti, Quantifying Einstein-Podolsky-Rosen Steering,  {\em Phys. Rev. Lett.} \textbf{112} 180404 (2014).

\bibitem{TeikoDaniel} 
P.~Stano, D.~Reitzner, and T.~Heinosaari, Coexistence of qubit effects, {\em Phys.~Rev.~A} \textbf{78} 012315 (2008).

\bibitem{UoMoGu14} R. Uola, T. Moroder, and O. G\"uhne, Joint Measurability of Generalized Measurements Implies Classicality, {\em Phys. Rev. Lett.} \textbf{113} 160403 (2014).


\bibitem{VaBo96} L. Vandenberghe and S. Boyd, Semidefinite Programming {\em SIAM Review} {\bf 38} 49-95 (1996).

\bibitem{ViTa99} G. Vidal and R. Tarrach, Robustness of entanglement, {\em Phys. Rev. A} \textbf{59} 141-155 (1999).

\bibitem{ViWe02} G. Vidal and R.F. Werner, Computable measure of entanglement, {\em Phys. Rev. A} \textbf{65} 032314 (2002).

\bibitem{We89} R.F. Werner, Quantum states with Einstein-Podolsky-Rosen correlations admitting a hidden-variable model, {\em Phys. Rev. A} \textbf{40} 4277 (1989).

\bibitem{WeWo01} R. F. Werner and M. Wolf, Bell inequalities and entanglement, {\em Quantum Inf. Comput.} \textbf{1} 1-25 (2001).

\bibitem{WiJoDo07} H.M. Wiseman, S.J. Jones, and A.C. Doherty, Steering,
Entanglement, Nonlocality, and the Einstein-Podolsky-
Rosen Paradox, {\em Phys. Rev. Lett.} \textbf{98,} 140402 (2007).

\bibitem{WoPeFe09}
M.M.~Wolf ,D.~Perez-Garcia, and C.~Fernandez, Measurements incompatible in quantum theory cannot be measured jointly in any other local theory, {\em Phys.~Rev.~Lett.} \textbf{103} 230402 (2009).

\bibitem{YLLO}
S.~Yu, N.~Liu, L.~Li, and C. H.~Oh, Joint measurement of two unsharp observables of a qubit, {\em Phys. Rev. A.} \textbf{81,} 062116 (2010).
.
\bibitem{ZuEtal} C. Zu, W.-B. Wang,	L. He, W.-G. Zhang, C.-Y. Dai, F. Wang, and L.-M. Duan, Experimental realization of universal geometric quantum gates with solid-state spins, Nature (London) \textbf{514} 72 (2014) 




\end{thebibliography}
\end{document}